\newtheorem{lemma}{Lemma}
\newtheorem{theorem}{Theorem}
\newtheorem{proposition}{Proposition}
\newcommand{\elim}{\mathit{elim}}
\newcommand{\GEQRF}{\ensuremath{\mathit{GEQRF}}\xspace}
\newcommand{\GEQRTWO}{\ensuremath{\mathit{GEQR2}}\xspace}
\newcommand{\GEQRT}{\ensuremath{\mathit{GEQRT}}\xspace}
\newcommand{\TSQRT}{\ensuremath{\mathit{TSQRT}}\xspace}
\newcommand{\UNMQR}{\ensuremath{\mathit{UNMQR}}\xspace}
\newcommand{\TSMQR}{\ensuremath{\mathit{TSMQR}}\xspace}
\newcommand{\TTQRT}{\ensuremath{\mathit{TTQRT}}\xspace}
\newcommand{\TTMQR}{\ensuremath{\mathit{TTMQR}}\xspace}
\newcommand{\coarse}{\ensuremath{\mathit{coarse}}\xspace}
\newcommand{\BS}{\ensuremath{\mathit{BS}}\xspace}
\newcommand{\s}{\ensuremath{\star}\xspace}
\newcommand{\TSFT}{\textsc{TS-FlatTree}\xspace}
\newcommand{\TTFT}{\textsc{TT-FlatTree}\xspace}
\newcommand{\SK}{\textsc{Sameh-Kuck}\xspace}
\newcommand{\MC}{\textsc{Fibonacci}\xspace}
\newcommand{\BT}{\textsc{BinaryTree}\xspace}
\newcommand{\Greedy}{\textsc{Greedy}\xspace}
\newcommand{\FT}{\textsc{FlatTree}\xspace}
\newcommand{\PT}{\textsc{PlasmaTree}\xspace}
\newcommand{\ASAP}{\textsc{Asap}\xspace}
\newcommand{\GRASAP}{\textsc{Grasap}}
\newcommand{\Alg}{\textsc{Alg}\xspace}
\newcommand{\TiledQRURL}{\url{http://graal.ens-lyon.fr/~mjacquel/tiledQR.html}\xspace}
\newcommand{\forRR}[1]{\ifthenelse{\boolean{withproof}}{{#1}}{}}
\newcommand{\forSCpaper}[1]{\ifthenelse{\boolean{withoutproof}}{{#1}}{}}
\date{Avril 2011}
\author{
  Henricus Bouwmeester
\thanks{University of Colorado Denver}
\thanks{Research of the first author was fully supported by the National Science Foundation grant \# NSF CCF 811520.}
  \and
  Mathias Jacquelin
\thanks{\'Ecole Normale Sup\'erieure de Lyon}
  \and
  Julien Langou
\footnotemark[1]
\thanks{Research of the third author was fully supported by the National Science Foundation grant \# NSF CCF 1054864.}
  \and
  Yves Robert
\footnotemark[3]
\thanks{Institut Universitaire de France}
\thanks{Research of the fourth author was supported in part by the ANR StochaGrid and RESCUE projects.}
}
\title{Tiled QR factorization algorithms}
\begin{document}

\maketitle
\begin{abstract}
        This work revisits existing algorithms for the QR factorization of
    rectangular matrices composed of $p \times q$ tiles, where $p \geq q$.
    Within this framework, we study the critical paths and performance of
    algorithms such as \SK, \MC, \Greedy, and those found within PLASMA.
    Although neither \MC nor \Greedy is optimal, both are shown to
    be asymptotically optimal for all matrices of size $p = q^2 f(q)$, where
    $f$ is any function such that $\lim_{+\infty} f= 0$. This novel and important complexity
    result applies to
    all matrices where $p$ and $q$ are proportional, $ p = \lambda q$, with
    $\lambda \geq 1$, thereby encompassing many important situations in
    practice (least squares). We provide an extensive set of experiments that
    show the superiority of the new algorithms for
    tall matrices.
\end{abstract}

\section{Introduction}
\label{sec.intro}

Given an $m$-by-$n$ matrix $A$ with $n \leq m$, we consider the computation of
its QR factorization, which is the factorization $A = QR$, where $Q$ is an
$m$-by-$n$ unitary matrix ($Q^HQ = I_n$), and $R$ is upper triangular.

The QR factorization is the time consuming stage of some important numerical
computations.  The QR factorization of an $m$-by-$n$ matrix with $n \leq m$ is
needed for solving a linear least squares %of equations
with $m$ equations (observations) and $n$ unknowns.  The QR factorization of an $m$-by-$n$ matrix
with $n \leq m$ is used to compute an orthogonal basis (the $Q$-factor) of the
column span of the initial matrix $A$.  For example, all block iterative
methods (used to solve large sparse linear systems of equations or computing
some relevant eigenvalues of such systems) require orthogonalizing a set of
vectors at each step of the process. For these two usage examples, while
$n\leq m$, $n$ can range from $n \ll m$ to $ n \lessapprox m$. We note that the
extreme case $n=m$ is also relevant:  the QR factorization of a matrix can be
used to solve (square) linear systems of equations (in that case $n=m$). While
this requires twice as many flops as an LU factorization, using a QR
factorization (a) is unconditionally stable (Gaussian elimination with partial
pivoting or pairwise pivoting is not) and (b) avoids pivoting so it may well be
faster in some cases (despite requiring twice as many flops).

To obtain a QR factorization, we consider algorithms which apply a
sequence of $m$-by-$m$ unitary transformations, $U_i$, ($U_i^HU_i=I$,),
$i=1,\dots,\ell$, on the left of the matrix $A$, such that after $\ell$
transformations the resulting matrix $ R = U_\ell \ldots U_1 A $ is upper
triangular, in which case, $R$ is indeed the $R$-factor of the QR factorization.
The $Q$-factor (if needed) can then be obtained by computing $ Q = U_1^H \ldots
U_\ell^H $.  These types of algorithms are in regular use, e.g. in the
LAPACK and ScaLAPACK libraries, and are favored over others algorithms (Cholesky
QR or Gram-Schmidt) for their stability.

The unitary transformation $U_i$ is chosen so as to introduce some zeros in the
current update matrix $ U_{i-1} \ldots U_1 A $. The two basic transformations
are Givens rotations and Householder reflections.  One Givens rotation
introduces one additional zero; the whole triangularization requires $mn -
n(n+1)/2$ Givens rotations for $n<m$. One elementary Householder reflection
simultaneously introduces $m-i$ zeros in position $i+1$ to $m$ in column $i$;
the whole triangularization requires $n$ Householder reflections for $n<m$.
(See LAPACK subroutine \GEQRTWO.) The LAPACK \GEQRT subroutine constructs a
compact WY representation to apply a sequence of $i_b$ Householder reflections,
this enables one to introduce the appropriate zeros in $i_b$ consecutive columns
and thus leverage optimized Level 3 BLAS subroutines during the update. The
blocking of Givens rotations is also possible but is more costly in terms of
flops.

The main interest of Givens rotations over Householder transformations is that
one can concurrently introduce zeros using disjoint pairs of rows, in other
words, two transformations $U_i$ and $U_{i+1}$ may be applicable concurrently.
This is not possible using the original Householder reflection algorithm since
the transformations work on whole columns and thus does not exhibit this type
of intrinsic parallelism forcing this kind of Householder reflections to be
applied sequentially. The advantage of Householder reflections over Givens
rotations is that, first, Householder reflections perform less flops, and second,
the compact WY transformation enables high sequential performance of the
algorithm. In a multicore setting, where data locality and parallelism are
crucial algorithmic characteristics for enabling performance, the tiled QR
factorization algorithm combines both ideas: use of Householder reflections for
high sequential performance and use of a scheme ala Givens rotations to enable
parallelism within cores. In essence, one can think (i) either of the tiled QR
factorization as a Givens rotation scheme but on tiles ($m_b$-by-$n_b$
submatrices) instead of on scalars ($1$-by-$1$ submatrices) as in the original
scheme, (ii) or of it as a blocked Householder reflection scheme
where each reflection is confined to an extent much less than the full column
span, which enables concurrency with other reflections.

Tiled QR factorization in the context of multicore architectures has been
introduced in~\cite{Buttari2008,tileplasma,Quintana:2009}. Initially the focus
was on square matrices and the sequence of unitary transformations presented
was analogous to \SK~\cite{SamehKuck78}, which corresponds to reducing the panels with flat trees.
The possibility of using any tree in order to either
maximize parallelism or minimize communication is explained in~\cite{CAQR}.

The focus of this manuscript is in maximizing parallelism.  Stemming from the
observation that a binary tree is best for tall and skinny matrices and a flat
tree is best for square matrices, Hadri et al.~\cite{Hadri_ipdps_2010}, propose
to use trees which combine flat trees at the bottom level with a binary tree at
the top level in order to exhibit more parallelism.  Our theoretical and
experimental work explains that we can adapt \MC~\cite{ModiClarke84} and
\Greedy~\cite{j12,j14} to tiles, resulting in yet better algorithms in terms of
parallelism. Moreover our new algorithms do not have any tuning parameter such
as the domain size in the case of~\cite{Hadri_ipdps_2010}.

The focus of this manuscript is not in trying to reduce communication (data
movement between memory hierarchy) to a minimum.  Relatively low level of communication is
naturally achieved by the algorithm by tiling the operations.
How to optimize the trade-off communication and parallelism is out of the scope of this manuscript.
For this reason, we consider square tiling with constant tile size.
In order to
increase parallelism, we use so called TT kernels which are more parallel but
performs potentially more communication and are less efficient in sequential
than the TS kernels. (A longer discussion on the issue can be found in Section~\ref{sec:kernels}.)
This is another trade-off that we made and we opted for as much parallelism as possible.

We can quote three manuscripts who use some kind of rectangular tiling.  Demmel et
al.~\cite{CAQR} sequentially process rectangular tiles with a recursive QR
factorization algorithm (which is communication optimal in sequential) and then
uses reduction trees to perform the QR factorization in parallel.  Experimental
results are given using a binary tree on tall and skinny matrices. The same algorithms 
is used on the grid (grid of clusters) in~\cite{tsqr-grid}. 
The ScaLAPACK algorithm is used independently on each cluster on a large parallel distributed rectangular tile;
then, a binary tree is used at the grid level among the clusters.
Demmel et
al.~\cite{mohiyuddin:SC09} use a binary tree on top of a flat tree for tall and
skinny matrices. The binary tree is therefore used on rectangular tiles. The flat tree is used locally on the nodes to reduce
sequential communication, while the binary tree is used within the nodes to
increase parallelism.  Finally, the approach of Hadri et
al.~\cite{Hadri_ipdps_2010} is not only interesting in term of parallelism to
tackle various matrix shapes, it is also interesting
in reducing communication (same approach in this case as
in~\cite{mohiyuddin:SC09}) and enables the use of TS kernels.

The sequential kernels of the Tiled QR factorization (executed on a core) are
made of standard blocked algorithms ala LAPACK encoded in kernels; the
development of these kernels is well understood. The focus of this manuscript is on
improving the overall degree of parallelism of the algorithm.  Given a
$p$-by-$q$ tile matrix, we seek to find an appropriate sequence of unitary
transformations on the tiled matrix so as to maximize parallelism (minimize
critical path length). We will get our inspiration in previous work from the
70s/80s on Givens rotations where the question was somewhat related: given an
$m$-by-$n$ matrix, find an appropriate sequence of Givens rotations as to
maximize parallelism. This question is essentially answered
in~\cite{j12,j14,ModiClarke84,SamehKuck78}; we call this class of
algorithms ``coarse-grain algorithms.''

Working with tiles instead of scalars, we introduce four essential differences
between the analysis and the reality of the tiled algorithms and the
coarse-grain algorithms.  First, while there are only two states for a scalar
(nonzero or zero), a tile can be in three states (zero, triangle or full).
Second, there are more operations available on tiles to introduce zeros; we have
a total of three different tasks which can introduce zeros in a matrix.  Third,
the factorization and the update are dissociated to enable factorization stages
to overlap with update stages. In the coarse-grain algorithm, the factorization and the
associated update are considered as a single stage. Fourth and last, while
coarse-grain algorithms have only one task, we end up with six different tasks,
which have different computational weights; this dramatically complicates the critical path analysis of
the tiled algorithms.
% we therefore need to consider weights on these tasks.
% This last sentence needs to be revised (I am not speaking to my muses at the
% moment, so I am at a loss of words)

While the \Greedy algorithm is optimal for ``coarse-grain algorithms'', we show
that it is not in the case of tiled algorithms. We are unable to devise an
optimal algorithm at this point, but we can prove that both \Greedy and \MC are
asymptotically optimal for all matrices of size $p = q^2 f(q)$, where $f$ is
any function such that $\lim_{+\infty} f= 0$. This result applies to all
matrices where $p$ and $q$ are proportional, $ p = \lambda q$, with $\lambda
\geq 1$, thereby encompassing many important situations in practice (least
squares).

This manuscript is organized as follows. Section~\ref{sec.QR} reviews the numerical
kernels needed to perform a tiled QR factorization, and introduces elimination lists,
which enable us to formally define tiled algorithms. Section~\ref{sec.CP} presents the core
algorithmic contributions of this manuscript. One major result is the asymptotic optimality of two new tiled algorithms, \MC and \Greedy.
Section~\ref{sec.experiments} is devoted to numerical experiments on
multicore platforms. For tall matrices ($p \geq 2q$), these experiments confirm the superiority
of the new algorithms over state-of-the-art solutions of the PLASMA library~\cite{Buttari2008,tileplasma,CAQR,Hadri_ipdps_2010}.
Finally, we provide some concluding remarks in Section~\ref{sec.conclusion}.

\section{The QR factorization algorithm}
\label{sec.QR}

Tiled algorithms are expressed in terms of tile operations rather than
elementary operations.  Each tile is of size $n_b \times n_b$, where $n_b$ is a
parameter tuned to squeeze the most out of arithmetic units and memory
hierarchy. Typically, $n_b$ ranges from $80$ to $200$ on state-of-the-art
machines~\cite{sc09-agullo}. Algorithm~\ref{alg.QR} outlines a naive tiled QR
algorithm, where loop indices represent tiles:

\begin{algorithm}[htbp]
  \DontPrintSemicolon
  \For{$\textnormal{k} = 1$ to $\min(p,q)$}{
     \For{$\textnormal{i} = k+1$ to $p$}{
     $\elim(i, piv(i,k), k)$\;
    }
  }
\caption{Naive QR algorithm for a tiled $p \times q$ matrix.}
\label{alg.QR}
\end{algorithm}

In Algorithm~\ref{alg.QR}, $k$ is the panel index, and $\elim(i, piv(i,k), k)$
is an orthogonal transformation that combines rows $i$ and $piv(i,k)$ to zero
out the tile in position $(i,k)$. However, this formulation is somewhat
misleading, as there is much more freedom for QR factorization algorithms than,
say, for Cholesky algorithms (and contrarily to LU elimination algorithms,
there are no numerical stability issues).  For instance in column $1$, the
algorithm must eliminate all tiles $(i,1)$ where $i>1$, but it can do so in
several ways. Take $p=6$. Algorithm~\ref{alg.QR} uses the transformations
\[\elim(2, 1, 1), \elim(3, 1, 1), \elim(4, 1, 1), \elim(5, 1, 1), \elim(6, 1, 1)\]
But the following scheme is also valid:
\[\elim(3, 1, 1), \elim(6, 4, 1), \elim(2, 1, 1), \elim(5, 4, 1), \elim(4, 1, 1)\]
In this latter scheme, the first two transformations  $\elim(3, 1, 1)$ and $\elim(6, 4, 1)$
use distinct pairs of rows, and they can execute in parallel. On the contrary, $\elim(3, 1, 1)$ and
$\elim(2, 1, 1)$ use the same pivot row and must be sequentialized.
To complicate matters, it is possible to have two orthogonal transformations that execute in parallel
but involve zeroing a tile in two different columns. For instance we can add $\elim(6, 5, 2)$
to the previous transformations and run it concurrently with, say, $\elim(2, 1, 1)$.
Any tiled QR algorithm will be characterized by an \emph{elimination list}, which provides
the ordered list of the transformations used to zero out all the tiles below the diagonal.
This elimination list must obey certain conditions so that the factorization is valid.
For instance, $\elim(6, 5, 2)$ must follow $\elim(6, 4, 1)$ and $\elim(5, 4, 1)$
in the previous list, because there is a flow dependence between these transformations.
Note that, although the elimination list is given as a totally ordered sequence, some transformations can execute
in parallel, provided that they are not linked by a dependence: in the example,
$\elim(6, 4, 1)$ and $\elim(2, 1, 1)$  could have been swapped, and the elimination list
would still be valid.

Before formally stating the conditions that guarantee the validity of (the
elimination list of) an algorithm, we explain how orthogonal transformations
can be implemented.

\subsection{Kernels}
\label{sec:kernels}

To implement a given orthogonal transformation $\elim(i, piv(i,k),
k)$, one can use six different kernels, whose costs are given in
Table~\ref{tab.kernels}.  In this table, the unit of time is the time to
perform $\frac{n_b^3}{3}$ floating-point operations.
%~\cite{lawn41}.

\begin{table*}
\centering
\begin{tabular}{|c||c|c||c|c|}
  \hline
  Operation & \multicolumn{2}{|c||}{Panel} & \multicolumn{2}{c|}{Update} \\ \hline
  & Name & Cost & Name & Cost \\ \hline
  Factor square into triangle        & \GEQRT & 4 & \UNMQR & 6  \\ \hline
  Zero square with triangle on top   & \TSQRT & 6 & \TSMQR & 12 \\ \hline
  Zero triangle with triangle on top & \TTQRT & 2 & \TTMQR & 6  \\ \hline
\end{tabular}
\caption{Kernels for tiled QR. The unit of time is $\frac{n_b^3}{3}$ floating-point operations, where $n_b$ is the blocksize.}
\label{tab.kernels}
\end{table*}

%\medskip
There are two main possibilities to implement an orthogonal transformation $\elim(i,
piv(i,k), k)$: The first version eliminates tile $(i,k)$ with the \emph{TS (Triangle
on top of square)} kernels, as shown in Algorithm~\ref{alg.elimSQ}:

\begin{algorithm}[htbp]
  \DontPrintSemicolon
  $\GEQRT(piv(i,k), k)$\;
  $\TSQRT(i,piv(i,k), k)$\;
  \For{$\textnormal{j} = k+1$ to $q$}{
     $\UNMQR(piv(i,k), k, j)$\;
     $\TSMQR(i, piv(i,k), k, j)$\;
  }
\caption{Elimination $\elim(i, piv(i,k), k)$ via \emph{TS (Triangle on top of square)} kernels.}
\label{alg.elimSQ}
\end{algorithm}

Here the tile panel $(piv(i,k), k)$ is factored into a triangle (with \GEQRT).
The transformation is applied to subsequent tiles $(piv(i,k),j)$, $j>k$, in row
$piv(i,k)$ (with \UNMQR). Tile $(i,k)$ is zeroed out (with \TSQRT), and
subsequent tiles $(i,j)$, $j>k$, in row $i$ are updated (with \TSMQR). The flop
count is $4+6+(6+12)(q-k)=10+18(q-k)$ (expressed in same time unit as in
Table~\ref{tab.kernels}).  Dependencies are the following:
\[
\begin{array}{ll}%
\GEQRT(piv(i,k), k) \prec \TSQRT(i, piv(i,k), k)\\%
\GEQRT(piv(i,k), k) \prec \UNMQR(piv(i,k), k, j) & \text{ for } j>k\\%
\UNMQR(piv(i,k) ,k, j) \prec \TSMQR(i, piv(i,k), k, j) & \text{ for } j>k\\%
\TSQRT(i,piv(i,k),k) \prec \TSMQR(i, piv(i,k), k, j) & \text{ for } j>k%
\end{array}%
\]
Note that $\TSQRT(i, piv(i,k), k)$ and $\UNMQR(piv(i,k), k, j)$ can be executed in
parallel, as well as \UNMQR operations on different columns $j, j' >k$. With an
unbounded number of processors, the parallel time is thus $4+6+12=22$
time-units.

The second approach to implement the orthogonal transformation \linebreak $\elim(i,
piv(i,k), k)$ is with the \emph{TT (Triangle on top of triangle)} kernels, as shown in
Algorithm~\ref{alg.elimTR}:

\begin{algorithm}[htbp]
  \DontPrintSemicolon
  $\GEQRT(piv(i,k), k)$\;
  $\GEQRT(i, k)$\;
  \For{$\textnormal{j} = k+1$ to $q$}{
     $\UNMQR(piv(i,k), k, j)$\;
     $\UNMQR(i, k, j)$\;
  }
  $\TTQRT(i, piv(i,k), k)$\;
  \For{$\textnormal{j} = k+1$ to $q$}{
     $\TTMQR(i, piv(i,k), k, j)$\;
  }
\caption{Elimination $\elim(i, piv(i,k), k)$ via \emph{TT (Triangle on top of triangle)} kernels.}
\label{alg.elimTR}
\end{algorithm}

Here both tiles $(piv(i,k),k)$ and $(i,k)$ are factored into a triangle (with
\GEQRT). The corresponding transformations are applied to subsequent tiles
$(piv(i,k),j)$ and $(i,j)$, $j>k$, in both rows $piv(i,k)$ and $i$ (with \UNMQR).
Tile $(i,k)$ is zeroed out (with \TTQRT), and subsequent tiles $(i,j)$, $j>k$,
in row $i$ are updated (with \TTMQR).  The flop count is
$2(4+6(q-k))+2+6(q-k)=10+18(q-k)$, just as before.  Dependencies are the
following:
\[
\begin{array}{ll}%
\GEQRT(piv(i,k), k) \prec \UNMQR(piv(i,k), k, j) & \text{ for } j>k\\%
\GEQRT(i, k) \prec \UNMQR(i, k, j) & \text{ for } j>k\\%
\GEQRT(piv(i,k), k) \prec \TTQRT(i, piv(i,k), k)\\%
\GEQRT(i, k) \prec \TTQRT(i, piv(i,k), k)\\%
\TTQRT(i, piv(i,k), k) \prec \TTMQR(i, piv(i,k), k, j) & \text{ for } j>k\\%
\UNMQR(piv(i,k), k, j) \prec \TTMQR(i, piv(i,k), k, j) & \text{ for } j>k\\%
\UNMQR(i, k, j) \prec \TTMQR(i, piv(i,k), k, j) & \text{ for } j>k%
\end{array}%
\]%
Now the factor operations in row $piv(i,k)$ and $i$ can be executed in parallel.
Moreover, the \UNMQR updates can be run in parallel with the \TTQRT
factorization.  Thus, with an unbounded number of processors, the parallel time
is $4+6+6=16$ time-units.

\medskip
In Algorithm~\ref{alg.elimSQ} and~\ref{alg.elimTR}, it is understood that if a
tile is already in triangle form, then the associated \GEQRT and update kernels
are not applied.

All the new algorithms
introduced in this manuscript are based on \emph{TT} (kernels.  From an algorithmic perspective, \emph{TT} kernels
are more appealing than \emph{TS} kernels, as they offer more parallelism.
More precisely, we can always break a \emph{TS} kernel into two \emph{TT} kernels: We
can replace a $\TSQRT(i,piv(i,k), k)$ (following a $\GEQRT(piv(i,k), k)$) by a $\GEQRT(i, k)$ and a $\TTQRT(i, piv(i,k), k)$. A similar transformation can be made for the updates. Hence a \emph{TS}-based tiled algorithm can always be executed with \emph{TT} kernels,
while the converse is not true.
However, the \emph{TS} kernels provide more data locality, they
benefit form a very efficient implementation (see Section~\ref{sec.experiments}), and several existing
algorithms use these kernels. For all these reasons, and for comprehensiveness,
our experiments will compare approaches based on both kernel types.

Currently (April 2011), the PLASMA library only contains \emph{TS} kernels. We have
mapped the PLASMA algorithm to \emph{TT} kernel algorithm using this
conversion. Going from a \emph{TS} kernel algorithm to a \emph{TT} kernel
algorithm is implicitly done by Hadri et al.~\cite{Hadri_enhancingparallelism} when going
from their ``Semi-Parallel'' to their ``Fully-Parallel'' algorithms.

\subsection{Elimination lists}

As stated above, any algorithm factorizing a tiled matrix
of size $p \times q$ is characterized by its elimination list.
Obviously, the algorithm must zero out all tiles below the diagonal: for each tile $(i,k)$, $i>k$,
$1 \leq k \leq \min(p,q)$, the list must contain exactly one entry $\elim(i, \s, k)$, where \s denotes some
row index $piv(i,k)$ . There are two conditions for a transformation $\elim(i, piv(i,k), k)$ to be valid:
\begin{itemize}
  \item both rows $i$ and $piv(i,k)$ must be ready, meaning that all their tiles
      left of the panel (of indices $(i,k')$ and $(piv(i,k),k')$ for $1 \leq k' < k$)
      must have already been zeroed out: all transformations $\elim(i, piv(i,k'), k')$
       and $\elim(piv(i,k), piv(piv(i,k),k'), k')$ must precede
      $\elim(i, piv(i,k), k)$ in the elimination list
  \item row $piv(i,k)$ must be a potential annihilator, meaning that tile
      $(piv(i,k),k)$ has not been zeroed out yet: \newline the transformation
      $\elim(piv(i,k), piv(piv(i,k),k), k)$ must follow \linebreak $\elim(i, piv(i,k), k)$ in the elimination list
\end{itemize}
Any algorithm that factorizes the tiled
matrix obeying these conditions is called a \emph{generic tiled algorithm} in
the following.

A critical result is that no matter what elimination list is used the total
weight of the tasks for performing a tiled QR factorization algorithm is
constant and equal to $6 p q^2 - 2 q^3$. Using our unit task weight of $n_b^3/3$,
with $m = p n_b$, and $n=q n_b$, we obtain $2 mn^2 -2/3 n^3 $ flops which is the
exact same number as for a standard Householder reflection algorithm as found
in LAPACK (e.g.,~\cite{lawn41}).
We note that this results is true if (a) we use TS kernels as well
and if (b) we use any tiling, (e.g. rectangular tiles).

\subsection{Execution schemes}

In essence, the execution of a generic tiled algorithm is fully determined by
its elimination list.  This list is statically given as input to the scheduler,
and the execution progresses dynamically, with the scheduler executing all
required transformations as soon as possible.  More precisely, each
transformation involves several kernels, whose execution starts as soon as they
are ready, i.e., as soon as all dependencies have been enforced.
Recall that a tile $(i,k)$ can be zeroed out only after all tiles $(i,k')$, with
$k' < k$, have been zeroed out. Execution progresses as follows:
\begin{compactitem}
\item Before being ready for elimination, tile $(i,k)$, $i>k$, must be updated
    $k-1$ times, in order to zero out the $k-1$ tiles to its left (of index
    $(i,k')$, $k' < k$). The last update is a transformation $\TTMQR(i, piv(i,k-1),
    k-1, k)$ for some row index $piv(i,k-1)$ such that $\elim(i, piv(i,k-1), k-1)$
    belongs to the elimination list. When completed, this transformation
    triggers the transformation $\GEQRT(i,k)$, which can be executed
    immediately after the completion of the \TTMQR. In turn, $\GEQRT(i,k)$
    triggers all updates $\UNMQR(i, k, j)$ for all $j > k$.  These updates
    are executed as soon as they are ready for execution.

\item The elimination $\elim(i, piv(i,k), k)$ is performed as soon as possible
    when both rows $i$ and $piv(i,k)$ are ready. Just after the completion of \linebreak
    $\GEQRT(i, k)$ and $\GEQRT(piv(i,k), k)$, kernel
    $\TTQRT(i,$ $piv(i,k), k)$ is launched. When finished, it
    triggers the updates $\TTMQR(i,piv(i,k), k, j)$ for all $j > k$.
\end{compactitem}

Obviously, the degree of parallelism that can be achieved depends upon the
eliminations that are chosen.  For instance, if all eliminations in a given
column use the same factor tile, they will be sequentialized.  This corresponds
to the flat tree elimination scheme described below: in each column $k$, it
uses $\elim(i, k, k)$ for all $i>k$. On the contrary, two eliminations
$\elim(i, piv(i,k), k)$  and $\elim(i', piv(i',k), k)$ in the same column can
be fully parallelized provided that they involve four different rows. Finally,
note that several eliminations can be initiated in different columns
simultaneously, provided that they involve different pairs of rows, and that
all these rows are ready (i.e., they have the desired number of leftmost
zeros).

The following lemma will prove very useful; it states that we can assume
w.l.o.g.\ that each tile is zeroed out by a tile above it, closer to the
diagonal.

\begin{lemma}
\label{th.above}
Any generic tiled algorithm can be modified, without changing its
execution time, so that all eliminations $\elim(i, piv(i,k), k)$ satisfy to $i >
piv(i,k)$.
\end{lemma}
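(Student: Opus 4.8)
The plan is to fix the algorithm column by column, turning the elimination pattern of each column into one where every annihilator sits above the tile it eliminates, by local rewrites that preserve the schedule's dependence structure.

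First, make the combinatorial picture explicit. Fix a column $k$ and a generic tiled algorithm. Its eliminations in column $k$ form a rooted tree $T_k$ on the vertices $\{k,k+1,\dots,p\}$, rooted at the diagonal row $k$: the parent of a row $v>k$ is the row $piv(v,k)$ that annihilated $(v,k)$. It is a tree because (i) a row must still be alive in column $k$ when used as an annihilator, so a child is zeroed strictly before its parent --- hence no cycles --- and (ii) $(k,k)$ is the unique tile of column $k$ that is never zeroed, so chasing parents always reaches $k$. An elimination $\elim(i,piv(i,k),k)$ violates the desired property exactly when the corresponding edge of $T_k$ has its parent index larger than its child index; call such an edge \emph{bad}, and note a bad edge forces $piv(i,k)>i>k$. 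The target, per column, is a tree on the same vertices, rooted at $k$, with no bad edge.

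The rewrite I would use on a bad edge $i\to j$ of $T_k$ (so $i$ is a child of $j$ and $i<j$) is to exchange the structural roles of $i$ and $j$ inside $T_k$ --- equivalently, relabel $T_k$ by the transposition $(i\,j)$; since $i,j\neq k$ this is an isomorphism of rooted trees, and the edge $j\to i$ it produces is good. A rewrite may create new bad edges, but the quantity obtained by summing $(\text{index of ancestor})-(\text{index of descendant})$ over all pairs of $T_k$ whose ancestor carries the larger index strictly decreases with each rewrite (the fixed parent-child pair contributes $-(j-i)$, while every inversion that can appear is cancelled in this sum by an inversion that simultaneously disappears); being a nonnegative integer it forces termination, after finitely many rewrites, at a tree with no bad edge. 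Carrying this out for $k=1,2,\dots,q$ in turn yields an algorithm all of whose eliminations satisfy $i>piv(i,k)$.

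The crux is that a rewrite does not change the execution time, and this is where the real care is needed. Relabeling $T_k$ by $(i\,j)$ keeps the tree, hence the column-$k$ kernel-dependence graph, isomorphic to the original; and at the moment $\elim(i,j,k)$ is performed, rows $i$ and $j$ carry the same zero pattern --- zeros in the first $k-1$ columns, non-zeros afterwards --- so they are genuinely interchangeable from then on. What is delicate is the interaction with the neighbouring columns: the rewrite interchanges the times at which tiles $(i,k)$ and $(j,k)$ get zeroed, so the change has to be propagated consistently through columns $k+1,\dots,q$, and this cannot be done by a plain global swap of rows $i$ and $j$ (in a column $m$ with $i\le m<j$, row $i$ lies on or above the diagonal while row $j$ lies below it). One must show that the propagation can be arranged so that the kernel-dependence graph of the modified algorithm is isomorphic to that of the original --- in particular that the critical path keeps exactly the same length, rather than merely not increasing. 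I expect this makespan-neutrality to be the main difficulty of the proof.
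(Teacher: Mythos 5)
Your overall strategy is the same as the paper's: view each column's eliminations as a reduction tree rooted at the diagonal row, remove a reverse (``bad'') elimination by exchanging the roles of the two rows it involves, and terminate via a decreasing potential. (The paper uses the simpler potential ``largest row index involved in a reverse elimination of the current column,'' taking for $i_0$ that largest index and for $i_1$ the first row that $i_0$ annihilates; your inversion-weight potential also works.) The within-column half of your argument is sound: $\TTQRT(i,j,k)$ and $\TTQRT(j,i,k)$ have identical predecessors, so the column-$k$ portion of the kernel DAG is merely relabelled by the transposition.

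However, the step you explicitly defer --- makespan-neutrality across columns --- is a genuine gap and not a removable technicality, so your proposal does not yet prove the lemma. The exchange swaps which of the tiles $(i,k)$ and $(j,k)$ is zeroed early and which survives as pivot to the end of the chain; hence it swaps the completion times of the last $\TTMQR$ updates of rows $i$ and $j$, i.e.\ the instants at which tiles $(i,k+1)$ and $(j,k+1)$ become ready. If column $k+1$ is left untouched, the row that is now ready late may be the one needed early there, and the critical path can strictly increase. A concrete instance: for $p=4$, $q=2$, the list $\elim(2,4,1)$, $\elim(3,4,1)$, $\elim(4,1,1)$, $\elim(3,2,2)$, $\elim(4,2,2)$ has critical path $34$, whereas exchanging the roles of rows $4$ and $2$ in column $1$ only (giving $\elim(4,2,1)$, $\elim(3,2,1)$, $\elim(2,1,1)$ with column $2$ unchanged) yields $36$; one must also re-pair column $2$ (e.g.\ $\elim(4,3,2)$, $\elim(3,2,2)$) to recover $34$. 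So the rewrite must be propagated through columns $k+1,\dots,q$, and, as you note, this cannot be a plain relabelling since the smaller of the two rows may sit on or above the diagonal there. You have correctly located the soft spot --- the paper itself disposes of it in the single sentence ``all dependencies are preserved, and the execution time is unchanged'' --- but locating the difficulty is not resolving it: as written, your argument establishes neither equality nor even non-increase of the execution time, which is the entire content of the lemma.
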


\begin{proof}
    Define a \emph{reverse} elimination as an elimination $\elim(i, piv(i,k), k)$
    where $i < piv(i,k)$.  Consider a generic tiled algorithm whose
    elimination list contains some reverse eliminations. Let $k_0$ be the first
    column to contain one of them. Let $i_{0}$ be the largest row index involved in
    a reverse elimination in column $k_0$.  The elimination list in column $k_0$
    may contain several reverse eliminations  $\elim(i_1, i_0, k_0)$, $\elim(i_2,
    i_0, k_0)$, \dots, $\elim(i_r, i_0, k_0)$, in that order, before row $i_0$ is
    eventually zeroed out by the transformation $\elim(i_0,$ $ piv(i_0,k_0), k_0)$.
    Note that $piv(i_0,k_0) < i_0$ by definition of $i_0$.  We modify
    the algorithm by exchanging the roles of rows $i_0$ and $i_1$ in column $k_0$:
    the elimination list now includes $\elim(i_0, i_1, k_0)$, $\elim(i_2, i_1,
    k_0)$, \dots, $\elim(i_r, i_1, k_0)$, and \newline
    $\elim(i_1, piv(i_0,k_0), k_0)$. All
    dependencies are preserved, and the execution time is unchanged. Now the largest
    row index involved in a reverse elimination in column $k_0$ is strictly smaller
    than $i_0$, and we repeat the procedure until there does not remain any reverse
    elimination in column $k_0$. We proceed inductively to the following columns,
    until all reverse eliminations have been suppressed.
\end{proof}

\section{Critical paths}
\label{sec.CP}

In this section we describe several generic tiled algorithms, and we provide
their critical paths, as well as optimality results. These algorithms are
inspired by algorithms that have been introduced twenty to thirty years
ago~\cite{SamehKuck78,ModiClarke84,j14,j12}, albeit for a much simpler,
\emph{coarse-grain} model.  In this ``old'' model, the time-unit is the time needed to
execute an orthogonal transformation across two matrix rows, regardless of the
position of the zero to be created, hence regardless of the length of these rows.
Although the granularity is much coarser in this model, any existing algorithm
for the old model can be transformed into a generic tiled algorithm, just by
enforcing the very same elimination list provided by the algorithm.
Critical paths are obtained using a discrete event based simulator specially developed
to this end, based on the Simgrid framework~\cite{simgrid}. It carefully
handles dependencies across tiles, and allows for the analysis of both static and dynamic
algorithms.\footnote{The discrete event based simulator, together with the code for all tiled algorithms,
is publicly available at \TiledQRURL}

\subsection{Coarse-grain algorithms}\label{sec:Coarsegrain}

We start with a short description of three algorithms for the coarse-grain model.
These algorithms are illustrated in Table~\ref{tab.coarse} for a $15 \times 6$ matrix.

\paragraph{\SK algorithm}
The \SK algorithm~\cite{SamehKuck78} uses the panel row for all eliminations in
each column, starting from below the diagonal and proceeding downwards.
Time-steps indicate the time-unit at which the elimination can be done,
assuming unbounded resources. Formally, the elimination list is
\[ \Big\{ \Big( \elim(i, k, k), i=k+1, k+2, \dots, p \Big), k =1, 2, \dots, \min(p,q) \Big\} \]

\paragraph{\MC algorithm}
The \MC algorithm is the Fibonacci scheme of order $1$ in~\cite{ModiClarke84}.
Let $\coarse(i,k)$ be the time-step at which tile $(i,k)$, $i>k$, is zeroed
out.  These values are computed as follows. In the first column, there are one
$5$, two $4$'s, three $3$'s, four $2$'s and four $1$'s (we would have had five
$1$'s with $p=16$).  Given $x$ as the least integer such that $x(x+1)/2 \geq
p-1$, we have $\coarse(i,1)= x-y+1$ where $y$ is the least integer such that $
i \leq y(y+1)/2 +1$.  Let the row indices of the $z$ tiles that are zeroed out at
step $s$, $1 \leq s \leq x$, range from $i$ to $i+z-1$. The elimination list
for these tiles is $\elim(i+j, piv(i+j,1), 1)$, with $piv(i+j) = i+j-z$ for $0
\leq j \leq z-1$. In other words, to eliminate a bunch of $z$ consecutive tiles
at the same time-step, the algorithm uses the $z$ rows above them, pairing them
in the natural order.  Now the elimination scheme of the next column is the
same as that of the previous column, shifted down by one row, and adding two
time-units: $\coarse(i,k) = \coarse(i-1,k-1)+2$, while the pairing obeys the
same rule.
% HMB: z is not defined in the equation i+z-1.

\paragraph{\Greedy algorithm}
At each step, the \Greedy algorithm~\cite{j14,j12} eliminates as many tiles as
possible in each column, starting with bottom rows. The pairing for the
eliminations is done exactly as for \MC.  There is no closed-form formula to
compute $\coarse(i,k)$, the time-step at which tile $(i,k)$ is eliminated, but
it is possible to provide recursive expressions (see~\cite{j14,j12}).

\begin{table}[htbp]
\centering
\resizebox{0.6\linewidth}{!}{%
\begin{tabular}{|rrrrrr|rrrrrr|rrrrrr|}%
\hline%
\multicolumn{6}{|c|}{ (a) \SK } & \multicolumn{6}{c|}{ (b) \MC }  &\multicolumn{6}{c|}{ (c) \Greedy } \\%
\hline%
\s &    &    &    &    &          &\s  &      &     &     &      &        &    \s  &     &    &     &     &    \\%
  1 & \s &    &    &    &         & 5  & \s   &     &     &      &        &     4  & \s  &    &     &     &    \\%
  2 &   3 & \s &    &    &        & 4  &  7   & \s  &     &      &        &     3  &  6  & \s &     &     &    \\%
   3 &   4 &   5 & \s &    &      & 4  &  6   & 9  &  \s &      &         &     3  &  5  &  8 &  \s &     &    \\%
   4 &   5 &   6 &   7 & \s &     & 3  &  6   & 8  &  11 &   \s &         &     2  &  5  &  7 &  10 &  \s &    \\%
   5 &   6 &   7 &   8 &   9 & \s & 3  &  5   & 8  &  10 &   13 &   \s    &     2  &  4  &  7 &   9 &  12 &  \s\\%
   6 &   7 &   8 &   9 &  10 &  11& 3  &  5   & 7  &  10 &   12 &   15    &     2  &  4  &  6 &   9 &  11 &  14\\%
   7 &   8 &   9 &  10 &  11 &  12& 2  &  5   & 7  &   9 &   12 &   14    &     2  &  4  &  6 &   8 &  10 &  13\\%
   8 &   9 &  10 &  11 &  12 &  13& 2  &  4   & 7  &   9 &   11 &   14    &     1  &  3  &  5 &   8 &  10 &  12\\%
   9 &  10 &  11 &  12 &  13 &  14& 2  &  4   & 6  &   9 &   11 &   13    &     1  &  3  &  5 &   7 &   9 &  11\\%
  10 &  11 &  12 &  13 &  14 &  15& 2  &  4   & 6  &   8 &   11 &   13    &     1  &  3  &  5 &   7 &   9 &  11\\%
  11 &  12 &  13 &  14 &  15 &  16& 1  &  4   & 6  &   8 &   10 &   13    &     1  &  3  &  4 &   6 &   8 &  10\\%
  12 &  13 &  14 &  15 &  16 &  17& 1  &  3   & 6  &   8 &   10 &   12    &     1  &  2  &  4 &   6 &   8 &  10\\%
  13 &  14 &  15 &  16 &  17 &  18& 1  &  3   & 5  &   8 &   10 &   12    &     1  &  2  &  4 &   5 &   7 &   9\\%
  14 &  15 &  16 &  17 &  18 &  19& 1  &  3   & 5  &   7 &   10 &   12    &     1  &  2  &  3 &   5 &   6 &   8\\%
\hline%
\end{tabular}%
}
  \caption{Time-steps for coarse-grain algorithms.}
  \label{tab.coarse}
\end{table}

Consider a rectangular $p \times q$ matrix, with $p>q$.  With the coarse-grain
model, the critical path of \SK is $p+q-2$, and that of \MC is $x + 2q - 2$,
where $x$ is the least integer such that $x(x+1)/2 \geq p-1$.  The critical
path of \Greedy is unknown, but two important results are known: (i) the
critical path of \Greedy is optimal; (ii) its value tends to $2q$ if $p$ is
negligible in front of $q^2$, i.e., if we have $p = q^2 f(q)$ where $f$ is any
function such that $\lim_{+\infty} f= 0$ (and $f(q) > 1/q$ so that $p > q$).
In particular, let $p$ and $q$ be proportional, $p = \lambda q$, with a
constant $\lambda > 1$: \MC is asymptotically optimal, because $x$ is of the
order of $\sqrt{q}$, hence its critical path is $2q + o(q)$. On the contrary,
\SK is not asymptotically optimal since its critical path is $(1+\lambda)q -
2$.  For square $q \times q$ matrices, critical paths are slightly different
($2q-3$ for \SK, $x+2q-4$ for \MC), but the important result is that all three
algorithms are asymptotically optimal in that case.

\begin{algorithm}[h]
\scriptsize{
   \DontPrintSemicolon
   \For{ $j = 1$ to $q$}{
        \tcc{$nz(j)$ is the number of tiles which have been eliminated in column $j$}
        $nZ(j) = 0$\;
        \tcc{$nT(j)$ is the number of tiles which have been triangularized in column $j$}
        $nT(j) = 0$\;
    }

\While{ column $q$ is not finished}{

   \For{ $j = q$ down to $1$}{

        \If{$j==1$}{
            \tcc{Triangularize the first column if not yet done}
            $nT_{\textnormal{new}} = nT(j) + ( p - nT(j) )$\;
            \If{ $p-nT(j) > 0$ }{
                \For{$k = p$ down to $1$ }{
                    $\GEQRT(k, j)$\;
                    \For{$\textnormal{jj} = j+1$ to $q$}{
                        $\UNMQR(k, j, jj)$\;
                     }
                }
            }
        }
        \Else{
            \tcc{Triangularize every tile having a zero in the previous column}
            $nT_{\textnormal{new}} = nZ(j-1)$\;
            \For{$k = nT(j)$ to $nT_{\textnormal{new}}-1$}{
                $\GEQRT(p-k, j)$\;
                \For{$\textnormal{jj} = j+1$ to $q$}{
                  $\UNMQR(p-k, j, jj)$\;
                }
            }
        }

        \tcc{Eliminate every tile triangularized in the previous step}
        $nZ_{\textnormal{new}} = nZ(j) + \lfloor \dfrac{ nT(j) - nZ(j) }{2} \rfloor$\;
        \For{ $kk = nZ(j)$ to $nZ_{\textnormal{new}}-1$}{
            $piv(p-kk) = p-kk-nZ_{\textnormal{new}}+nZ(j)$\;
            $\TTQRT(p-kk, piv(p-kk), j)$\;
            \For{$\textnormal{jj} = j+1$ to $q$}{
                $\TTMQR(p-kk, piv(p-kk), j, jj)$\\
            }
        }

        \tcc{Update the number of triangularized and eliminated tiles at the next step}
        $nT(j) = nT_{\textnormal{new}}$\;
        $nZ(j) = nZ_{\textnormal{new}}$\;
    }
}
}
\caption{\Greedy algorithm via \emph{TT} kernels.}
\label{alg.tiled-greedy}
\end{algorithm}

%\clearpage

\begin{table*}[tb]
\centering
\resizebox{.8\linewidth}{!}{%
\begin{tabular}{|rrrrrr|rrrrrr|rrrrrr|rrrrrr|rrrrrr|}%
\hline
\multicolumn{6}{|c|}{ (a) \SK }       & \multicolumn{6}{c|}{ (b) \MC }        &\multicolumn{6}{c|}{ (c) \Greedy }& \multicolumn{6}{c|}{ (d) \BT }  & \multicolumn{6}{c|}{ (e) \PT ($\BS = 5$) }  \\%
\hline%
 \s  &     &     &     &      &     & \s  &     &     &     &     &        &  \s&    &    &    &    &        &  \s &     &     &     &     &    & \s &    &    &    &    &   \\%
  6  & \s  &     &     &      &     &  14 & \s  &     &     &     &        &  12&  \s&    &    &    &        &   6 & \s  &     &     &     &    &  6 &  \s &    &    &    &   \\%
  8  &  28 & \s  &     &      &     &  12 &  48 & \s  &     &     &        &  10&  42&  \s&    &    &        &   8 &  28 & \s  &     &     &    &  8 &  28 &  \s &    &    &   \\%
 10  &  34 &  50 &  \s &      &     &  12 &  46 &  70 & \s  &     &        &  10&  40&  64&  \s&    &        &   6 &  36 &  56 &  \s &     &    & 10 &  34 &  50 &  \s &    &   \\%
 12  &  40 &  56 &  72 &  \s  &     &  10 &  42 &  68 &  92 & \s  &        &   8&  36&  62&  86& \s &        &  10 &  34 &  70 &  90 & \s  &    & 12 &  40 &  56 &  72 &  \s &   \\%
 14  &  46 &  62 &  78 &  94  &   \s&  10 &  40 &  64 &  90 & 114 & \s     &   8&  34&  56&  84& 106& \s     &   6 &  44 &  68 & 104 & 124 & \s & 14 &  46 &  62 &  78 &  94 &  \s\\%
 16  &  52 &  68 &  84 & 100  &  116&  10 &  40 &  62 &  86 & 112 & 136    &   8&  34&  56&  78& 102& 128    &   8 &  28 &  78 & 102 & 138 & 158&  6 &  54 &  74 &  90 & 106 & 122\\%
 18  &  58 &  74 &  90 & 106  &  122&   8 &  36 &  62 &  84 & 108 & 134    &   8&  30&  52&  78& 100& 122    &   6 &  42 &  62 & 112 & 136 & 172&  8 &  28 &  82 & 102 & 118 & 134\\%
 20  &  64 &  80 &  96 & 112  &  128&   8 &  34 &  58 &  84 & 106 & 130    &   6&  28&  50&  72& 100& 118    &  12 &  40 &  76 &  96 & 146 & 170& 10 &  34 &  50 & 110 & 130 & 146\\%
 22  &  70 &  86 & 102 & 118  &  134&   8 &  34 &  56 &  80 & 106 & 128    &   6&  28&  50&  72&  94& 116    &   6 &  46 &  74 & 110 & 130 & 180& 12 &  40 &  56 &  72 & 138 & 158\\%
 24  &  76 &  92 & 108 & 124  &  140&   8 &  34 &  56 &  78 & 102 & 128    &   6&  28&  50&  68&  94& 116    &   8 &  28 &  80 & 108 & 144 & 164& 16 &  52 &  68 &  84 & 100 & 166\\%
 26  &  82 &  98 & 114 & 130  &  146&   6 &  28 &  56 &  78 & 100 & 122    &   6&  28&  44&  66&  88& 110    &   6 &  36 &  56 & 114 & 142 & 178&  6 &  56 &  80 &  96 & 112 & 128\\%
 28  &  88 & 104 & 120 & 136  &  152&   6 &  28 &  50 &  78 & 100 & 122    &   6&  22&  44&  66&  88& 110    &  10 &  34 &  64 &  84 & 148 & 176&  8 &  28 &  84 & 108 & 124 & 140\\%
 30  &  94 & 110 & 126 & 142  &  158&   6 &  28 &  44 &  72 & 100 & 122    &   6&  22&  44&  60&  82& 104    &   6 &  38 &  62 &  92 & 112 & 182& 10 &  34 &  50 & 112 & 136 & 152\\%
 32  & 100 & 116 & 132 & 148  &  164&   6 &  22 &  44 &  60 &  94 & 116    &   6&  22&  38&  60&  76&  98    &   8 &  28 &  66 &  90 & 114 & 134& 12 &  40 &  56 &  72 & 140 & 164\\%
\hline
\end{tabular}%
}
  \caption{Time-steps for tiled algorithms.}
  \label{tab.tiled}
\end{table*}

%\clearpage

\subsection{Tiled algorithms}\label{sec:TiledAlgorithms}

As stated above, each coarse-grain algorithm can be transformed into a tiled
algorithm, simply by keeping the same elimination list, and triggering the
execution of each kernel as soon as possible.
However, because the weights of
the factor and update kernels are not the same, it is much more difficult to
compute the critical paths of the transformed (tiled) algorithms.
Table~\ref{tab.tiled} is the counterpart of Table~\ref{tab.coarse}, and depicts
the time-steps at which tiles are actually zeroed out. Note that the tiled
version of \SK is indeed the \FT algorithm in PLASMA~\cite{Buttari2008,tileplasma}, and we have
renamed it accordingly.  As an example, Algorithm~\ref{alg.tiled-greedy} shows
the \Greedy algorithm for the tiled model.

%\clearpage
A first (and quite unexpected) result is that \Greedy is no longer optimal, as
shown in the first two columns of Table~\ref{table.15x2x3} for a $15\times2$ matrix.
In each column and at each step, ``{\em the
\ASAP algorithm}'' starts the elimination of a tile as soon as there are at least
two rows ready for the transformation.  When $s \geq 2$ eliminations can start
simultaneously, \ASAP pairs the $2s$ rows just as \MC and \Greedy, the first
row (closest to the diagonal) with row $s+1$, the second row with row $s+2$,
and so on.  As a matter of a fact, when processing the second column, both
\ASAP and \Greedy begin with the elimination of lines 10 to 15 (at time step
20).  However, once tiles $(13,2)$, $(14,2)$ and $(15,2)$ are zeroed out (i.e.
at time step 22), \ASAP eliminates $4$ zeros, in rows $9$ through $12$. On the
contrary, \Greedy waits until time step $26$ to eliminate 6 zeros in rows $6$
through $12$.  In a sense, \ASAP is the counterpart of \Greedy at the tile
level.  However, \ASAP is not optimal either, as shown in
Table~\ref{table.15x2x3} for a $15\times3$ matrix.  On larger examples, the critical path of \Greedy is
better than that of \ASAP, as shown in Table~\ref{table.biggermat}.

We have seen that, for a $15\times 2$ matrix, \ASAP is better than \Greedy and
that, for a $15\times 3$ matrix, \Greedy is better than \ASAP. We can further
improve upon \Greedy in the $15\times 3$ case. We consider the \GRASAP($k$)
algorithm defined as: following the \Greedy algorithm up from columns $1$ to $q-k$
and then switching in \ASAP mode for the last $k$ columns. \GRASAP(0) is
\Greedy, while \GRASAP($q$) is \ASAP.  In
Table~\ref{table.15x2x3}(c), we give the results for \GRASAP(1). In this case
(a $15\times 3$ matrix), \GRASAP(1) is better than \Greedy.  \GRASAP(1) finishes at time-step
$62$, while \Greedy finishes at time-step $64$. Of course it would be interesting to 
determine the best value of $k$ as a function of $p$ and $q$, for the execution of \GRASAP($k$) on a $p \times q$ matrix.

\begin{table}[htbp]
\centering
\resizebox{\linewidth}{!}{%
\subfloat[\Greedy nor \ASAP are optimal.\label{table.15x2x3}]{%
\begin{tabular}{c|rrr|rrr|rrr|c}%
\cline{2-10}%
~~ & \multicolumn{3}{|c|}{ (a)  \Greedy  }  &\multicolumn{3}{c|}{ (b) \ASAP } &\multicolumn{3}{c|}{ (c) \GRASAP(1)}& ~~ \\%
\cline{2-10}%
~~ &                \s&    &                &         \s  &     &             &         \s  &     &             & ~~ \\%
~~ &                12&  \s&                &         12  & \s  &             &         12  & \s  &             & ~~ \\%
~~ &                10&  42&  \s            &         10  & 40  & \s          &         10  & 42  & \s          & ~~ \\%
~~ &                10&  40&  64            &         10  & 36  & 86          &         10  & 40  & 62          & ~~ \\%
~~ &                 8&  36&  62            &          8  & 34  & 80          &          8  & 36  & 58          & ~~ \\%
~~ &                 8&  34&  56            &          8  & 32  & 74          &          8  & 34  & 56          & ~~ \\%
~~ &                 8&  34&  56            &          8  & 30  & 68          &          8  & 34  & 56          & ~~ \\%
~~ &                 8&  30&  52            &          8  & 28  & 62          &          8  & 30  & 50          & ~~ \\%
~~ &                 6&  28&  50            &          6  & 28  & 56          &          6  & 28  & 50          & ~~ \\%
~~ &                 6&  28&  50            &          6  & 26  & 50          &          6  & 28  & 48          & ~~ \\%
~~ &                 6&  28&  50            &          6  & 24  & 46          &          6  & 28  & 46          & ~~ \\%
~~ &                 6&  28&  44            &          6  & 24  & 44          &          6  & 28  & 44          & ~~ \\%
~~ &                 6&  22&  44            &          6  & 22  & 44          &          6  & 22  & 44          & ~~ \\%
~~ &                 6&  22&  44            &          6  & 22  & 40          &          6  & 22  & 40          & ~~ \\%
~~ &                 6&  22&  38            &          6  & 22  & 38          &          6  & 22  & 38          & ~~ \\%
\cline{2-10}%
\end{tabular}%
}
~~
\subfloat[\Greedy generally outperforms \ASAP.]{%
\label{table.biggermat}%
\begin{tabular}{|c|l|r|r|r|r|}%
\cline{3-6}%
\multicolumn{2}{c}{}             & \multicolumn{4}{|c|}{$q$}\\%
\hline%
$p$                  & \multicolumn{1}{c|}{Algorithm} & \multicolumn{1}{c|}{16}  & \multicolumn{1}{c|}{32} & \multicolumn{1}{c|}{64} & \multicolumn{1}{c|}{128}    \\%
\hline%
\multirow{2}{*}{16}  & \Greedy   & 310 & \multirow{2}{*}{~}   &  \multirow{4}{*}{~}  &  \multirow{6}{*}{~}      \\%
                     & \ASAP     & 310 &    &    &        \\%
\cline{1-4}%
\multirow{2}{*}{32}  & \Greedy   & 360 &650 &    &        \\%
                     & \ASAP     & 402 &656 &    &        \\%
\cline{1-5}%
\multirow{2}{*}{64}  & \Greedy   & 374 &726 & 1342&        \\%
                     & \ASAP     & 588 &844 & 1354&        \\%
\hline%
\multirow{2}{*}{128} & \Greedy   & 396 &748 & 1452 & 2732    \\%
                     & \ASAP     & 966 &1222 & 1748& 2756    \\%
\hline%
\end{tabular}%
}
}
\caption{Neither \Greedy nor \ASAP are optimal.\label{table.greedyvsasap}}
\end{table}

\medskip
We have a closed-form formula for the critical path of tiled \FT, but not for
that of tiled \MC (contrarily to the coarse-grain case). But we provide an
asymptotic expression, both for \MC and for \Greedy. More importantly, we show that both
tiled algorithms are asymptotically optimal. We state our main result:

\begin{theorem}
\label{th.main}
For a tiled matrix of size $p \times q$, where $p \geq q$:
\begin{compactenum}
\item The critical path length of \FT is
% $2p+2$ if $p \geq q=1$, $6p + 16q - 22$ if $p > q >1 $ and $22q - 24$ if $p=q > 1$
     \begin{flalign*}
          2p + 2 \quad &\textrm{if $p \geq q = 1$}\\
          6p + 16q - 22 \quad &\textrm{if $p > q > 1$}\\
          22p - 24 \quad &\textrm{if $p = q > 1$}
      \end{flalign*}
\item The critical path length of \MC is at most $22q + 6 \lceil \sqrt{2p} \rceil$, and that of \Greedy is at most $22q + 6 \lceil \log_2{p} \rceil$
\item The optimal critical path length is at least $22q-30$
\item \MC is asymptotically optimal if $p = q^2 f(q)$, where $\lim_{+ \infty} f = 0$
\item \Greedy is asymptotically optimal if $\log_2 p = q f(q)$, where $\lim_{+ \infty} f = 0$
\end{compactenum}
\end{theorem}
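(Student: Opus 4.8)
The plan is to establish the five items essentially in the order stated, since the later ones build on the earlier ones. The bulk of the work is in items 1 and 2; items 3, 4, 5 are then relatively short.

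\textbf{Item 1: closed-form critical path of \FT.}
Since \FT is the tiled version of \SK, its elimination list is completely explicit: in column $k$, eliminate rows $k+1,\dots,p$ using $piv=k$ for every one of them. I would track, for each tile $(i,k)$, the time $t(i,k)$ at which the \TTMQR (or the \GEQRT, when $i=k$) that zeros it out completes, and write recurrences following the dependency chains listed in Section~\ref{sec:kernels} for the \emph{TT} kernels. The key structural observation is that row $k$ is triangularized once (cost $4$ for \GEQRT), then each subsequent row $i>k$ is eliminated in turn: its tile must first be triangularized by a \GEQRT (cost $4$), then combined with row $k$ via \TTQRT (cost $2$) and the updates \TTMQR (cost $6$) propagate along the row. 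Because all eliminations in a column share the pivot row $k$, they are serialized through that row, so within a column the critical sub-path accumulates a fixed increment per eliminated row; then passing from column $k$ to column $k+1$ costs a further fixed increment (the last \TTMQR on row $k+1$ in column $k$ must finish before \GEQRT$(k+1,k+1)$ starts). Summing these increments over the $p-k$ rows and $q$ columns, and being careful about the boundary rows/columns and the $q=1$ and $p=q$ degenerate cases (where there is no ``row below the diagonal in the last column'' and no inter-column propagation, respectively), gives the three stated formulas $2p+2$, $6p+16q-22$, $22p-24$. I would present this as an induction on $(k,i)$ with the recurrences solved in closed form; the arithmetic is routine once the recurrences are pinned down, so I would only display the recurrences and the solved sums.

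\textbf{Item 2: upper bounds for \MC and \Greedy.}
Here I would exploit the coarse-grain results quoted just before the theorem. In the coarse-grain model \MC has critical path $x+2q-2$ with $x$ the least integer with $x(x+1)/2 \ge p-1$, so $x \le \lceil\sqrt{2p}\rceil$; and \Greedy, being optimal and bounded by the binary-tree scheme, has coarse-grain critical path at most $\lceil\log_2 p\rceil + 2q$ (one can also argue directly that \Greedy halves the number of remaining rows each step in column $1$). The idea is to ``lift'' a coarse-grain schedule to a tiled schedule: an orthogonal transformation $\elim(i,piv,k)$ that occupies one coarse time-unit is realized by a bounded-depth DAG of \emph{TT} kernels, whose longest path through the relevant column-$j$ updates contributes a bounded amount. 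The cleanest accounting is to show that if tile $(i,k)$ is eliminated at coarse step $s$, then in the tiled schedule it is eliminated by time roughly $a\cdot s + b\cdot k + c$ for suitable constants, by induction over $s$ following the pairing rule (which is identical for \MC and \Greedy). Chasing the \emph{TT} dependency chains — \GEQRT then \UNMQR across the row, then \TTQRT then \TTMQR across the row — the per-coarse-step cost is at most $6$ (the weight of a \TTMQR, which dominates once the pipeline is full, matching the ``$16$'' and ``$22$'' constants from the single-elimination analysis) and the per-column cost is at most $22$. This yields critical path $\le 22q + 6x \le 22q + 6\lceil\sqrt{2p}\rceil$ for \MC and $\le 22q + 6\lceil\log_2 p\rceil$ for \Greedy. \emph{This lifting step is the main obstacle}: one must verify that the pipelining across the $q-k$ trailing columns does not accumulate more than a constant per column, i.e. that the updates of one elimination overlap properly with the factorizations of the next, exactly as in the $16$-time-unit analysis of a single \emph{TT} elimination but now chained; getting the constants and the additive slack right (and handling the transient at the start of each column) is where the care goes.

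\textbf{Items 3, 4, 5.}
For item 3, the lower bound $22q-30$: the total task weight is the constant $6pq^2-2q^3$ (stated in Section~\ref{sec.QR}), but that alone gives nothing with unbounded processors; instead I would exhibit an explicit chain of dependent kernels of length $\ge 22q-30$ that every generic tiled algorithm must contain. The natural candidate is a ``diagonal'' chain: to produce the final triangular factor one must, for each $k$, triangularize some tile in column $k$ and then apply its \UNMQR across to column $q$ before that column can be touched; more precisely, the tile in position $(q,q)$ cannot be finalized until a cascade through all $q$ columns has occurred, each column contributing at least the $\GEQRT+\UNMQR$ or $\TTQRT+\TTMQR$ weight, and Lemma~\ref{th.above} lets us assume pivots are above, which pins down the shape of this chain. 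Summing the minimal per-column contribution ($22$ per column in the generic full-pipeline case) and subtracting a bounded boundary correction gives $22q-30$. For items 4 and 5, combine item 2 with item 3: if $p = q^2 f(q)$ with $f\to 0$ then $\lceil\sqrt{2p}\rceil = \sqrt{2}\,q\sqrt{f(q)}\,(1+o(1)) = o(q)$, so \MC's critical path is $22q+o(q)$, matching the lower bound $22q-30$ up to lower-order terms, hence asymptotically optimal; similarly if $\log_2 p = q f(q)$ with $f\to 0$ then $6\lceil\log_2 p\rceil = o(q)$ and \Greedy's critical path is $22q+o(q)$, again asymptotically optimal. These two items are immediate corollaries once items 2 and 3 are in hand.
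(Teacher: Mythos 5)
Items 1, 2, 4 and 5 of your plan track the paper's proof closely: item 1 is the same induction on the explicit \FT elimination list (the paper's invariant is that tile $(i,k)$, $i>k\geq 2$, is zeroed at time $6i+16k-22$); item 2 is the same ``lift a coarse-grain schedule'' idea (the paper makes it concrete by defining a \emph{slowed-down} tiled schedule that zeroes tile $(i,k)$ at time $6\,\coarse(2,1)+22(k-1)-6(\coarse(k,k)-\coarse(i,k))$ and checking, partly by program, that all dependencies are met); and items 4--5 are the same corollary. The obstacle you flag in item 2 (verifying the pipelining constants) is exactly where the paper also concedes the case analysis is tedious and falls back on a mechanical sanity check, so you have correctly located the difficulty even if neither you nor the paper discharges it fully by hand.

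Item 3 is where your proposal has a genuine gap. You propose to exhibit a single ``diagonal'' chain of dependent kernels in the general $p\times q$ matrix and claim each column contributes $22$. But the longest single dependency chain linking the zeroing of a tile in column $k$ to the zeroing of a tile in column $k+1$ passes through one $\TTMQR$ ($6$), one $\GEQRT$ ($4$) and one $\TTQRT$ ($2$), i.e.\ only $12$ per column; the extra $10$ units per column come from the forced serialization of \emph{several} eliminations that must share rows within a column, together with the interleaved $\UNMQR$ updates, and which eliminations are serialized depends on the pairing the algorithm chooses. No single universal chain yields $22$ per column, and a $12q+O(1)$ lower bound is not enough to conclude items 4 and 5, whose upper bounds are $22q+o(q)$. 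The paper's key device, which your sketch misses, is to pass to a $q\times q$ matrix $B$ with only three non-zero sub-diagonals: Lemma~\ref{th.above} plus discarding eliminations below the third sub-diagonal shows the optimum for $A$ dominates the optimum for $B$, and in $B$ each column admits only a constant number of row pairings, so an exhaustive search over a few consecutive columns establishes the $22$-per-column bottleneck (and the $-30$ boundary correction). The authors explicitly remark that reasoning directly on $A$, as you propose, ``would need a completely different proof (yet to be invented).''
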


\begin{proof}
\textbf{Proof of (1).}
    Consider first the case $p \geq q = 1$. We shall proceed by
    induction on $p$ to show that the critical path of \FT is of length $2p+2$,
    If $p=1$, then from Table~\ref{tab.kernels} the result
    is obtained since only $\GEQRT(1,1)$ is required.  With the base case
    established, now assume that this holds for all $p-1 > q = 1$.  Thus at
    time $t=2(p-1) + 2 = 2p$, we have that for all $p-1 \geq i \geq 1$ tile
    $(i,1)$ has been factorized into a triangle and for all $p-1 \geq i > 1$,
    tile $(i,1)$ has been zeroed out.  Therefore, tile $(p,1)$ will be zeroed
    out with $\TTQRT(p,1)$ at time $t+2 = 2(p-1) + 2 + 2 = 2p + 2$.

    Consider now the case $p > q >1$. We show by induction on $k$ that tile $(i,k)$, for $i>k\geq2$, is zeroed
    out in \FT at time unit $6i + 16k - 22$.  For $k=2$, tile $(2,2)$ is updated from step $k=1$
    at time $4+6+6=16$, and it is factored into a triangle at time $20$. Tile
    $(3,2)$ is updated from step $k=1$ at time $22$ factored into a triangle at
    time $26$ and then zeroed out at time $26+2=28= 6\times 3+16 \times 2 -
    22$. A new tile in column $2$ is zeroed out every $6$ time units, hence the
    initialization of the induction for $k=2$.
    Assume now that the formula holds up to column $k$, and let $t=6(k+1) + 16k
    -22$ be the time at which tile $(k+1,k)$ is zeroed out.  Tile $(k+1,k+1)$
    is updated from step $k$ at time $t-2+6+6=t+10$ and factored into a
    triangle at time $t+14$. By induction, tile $(k+2,k)$ is zeroed out at time
    $t+6$, hence triangularized at time $t+4$. The corresponding $\UNMQR$
    update of tile $(k+2,k+1)$ ends at time $t+10$, its $\TTMQR$ update ends at
    time $\max(t+14,t+10)+6=t+20$.  Hence tile $(k+2,k+1)$ can indeed be zeroed
    out at time $\max(t+12,t+20)+2=t+22$. A new tile in column $k+1$ can be
    zeroed out every $6$ time units, hence the induction formula for $k+1$.

    Finally, for a square matrix of size $q \times q$, consider the above formula for a
    rectangular matrix with $p=q+1$. Instead of zeroing out the last tile
    $(q+1,q)$ with tile $(q,q)$,
    simply need to factor tile $(q,q)$ into a triangle with $\GEQRT(q,q)$. This
    costs $4$ time units instead of $6$ when adding $\TTQRT(q+1,q,q)$, and explains the difference of $2$ in
    the formula for square matrices.

\textbf{Proof of (2).}
    \MC and \Greedy are more difficult to analyze than \FT, but we provide an
    upper bound of their critical path. The approach is the same for both
    algorithms, and hereafter \Alg denotes either \MC or \Greedy.  Let
    $\coarse(i,k)$ be the time-step at which tile $(i,k)$ is zeroed out in \Alg
    with the coarse-grain model (see Table~\ref{tab.coarse} for examples).  We
    derive a ``slowed down'' version of the tiled version of \Alg by
    terminating the zeroing out of tile $(i,k)$ at time-step
    $$6 \coarse(2,1) +
    22(k-1) - 6 (\coarse(k,k) - \coarse(i,k)).$$
    We say that this version is
    slowed down because we do not start the zeroing out of the tiles as soon as
    possible. For instance in the first column, tile $(i,1)$ is zeroed out at
    time $6 \coarse(i,k)$, which is larger than the value given in
    Table~\ref{tab.tiled}.  However, we keep the same elimination list as in
    the original version of \Alg, and we trigger the update and factor
    operations as soon as possible when the zeroing out operation is completed.
    We only delay these latter operations.

    The intuitive idea for delaying the eliminations is that the corresponding
    updates will be fully overlapped, within a given column, or when proceeding
    from one column to the next: in this case, allowing for a time-shift of
    $22$ smooths the chaining of the updates. The regular and repetitive
    spacing of the eliminations allows us to check (just as we did to prove (1))
    that all dependencies are enforced in the slowed down version of \Alg.
    Because the case-analysis is tedious, we have written a program for a
    sanity check of the validity of \Alg\footnote{All program sources are
    publicly available at \TiledQRURL}.

    In the coarse-grain model, \Alg  terminates the first column in time $x$,
    so the critical path of its slowed down version is $6x + 22(q-1)$. For \MC,
    $x$ is the least integer such that $x(x+1)/2 \geq  p-1$, hence $x \leq
    \lceil \sqrt{2p} \rceil$.  For \Greedy, $x = \lceil \log_2(p-1) \rceil \leq
    \lceil \log_2{p} \rceil$, hence the result.

\textbf{Proof of (3).}
    Consider a square $q \times q$ matrix $B$, with $q \geq 2$. Assume that there
    are only three non-zero sub-diagonals, i.e., that tile $(i,k)$ is initially
    zero in $B$ for $i > k+3$. Because there are only three non-zero tiles below the
    diagonal, there is a constant number of possible row pairings in each
    column. An exhaustive search is to try all possible pairings in the first
    column, followed by all possible pairings in the second column, and so on.
    After a few columns, a pattern emerges, and we can identify that any
    optimal algorithm (there are several of them) needs at least $22$
    time-steps to proceed from one column to the next.  It is possible to save
    a few steps at the beginning and end of the execution, and the optimal
    critical path is $22q - 30$. Here also, because the case-analysis is long
    and tedious, we have written a program for a sanity check of the latter
    value.

    Now we show that the optimal critical path for a general $p \times q$
    matrix $A$, with $p \geq q \geq 2$, is at least equal to the critical path
    of the previous $q \times q$ matrix $B$ with three sub-diagonals. Indeed,
    Lemma~\ref{th.above} shows that there exist optimal algorithms for
    factoring $A$ without any reverse elimination. Consider such an algorithm,
    and discard all eliminations that involve zeroing out elements below the
    third sub-diagonal, or outside the $q \times q$ top square: the critical
    path cannot increase, and we have an elimination scheme for $B$, which
    proves the desired result.

    Note that using $B$ instead of $A$ is the key to the proof: in each column of
    $B$, there is only a constant number of possible row pairings, which makes it possible to
    try all combinations for several consecutive columns. Reasoning with $A$ instead would
    need a completely different proof (yet to be invented).

\textbf{Proof of (4) and (5).}
    These are a direct consequence of (3) and (4).
\end{proof}

\noindent
\textbf{Remarks:}
\begin{enumerate}

\item We express all critical path lengths in terms of $p$ and $q$, with an unit of
$n_b^3/3$ floating-point operations. It is easy to get critical path
lengths in term of $m$, $n$, and $n_b$, and with elementary floating-point
operations as unit, assuming that all tiles are full.  (In other words, $m$ and $n$ are
multiple of $n_b$.) For example for \FT, we get
$(2/3) mn_b^2 + (2/3) n_b^3$ if $m \geq n=n_b$,
$2mn_b^2 + 16/3nn_b^2 - (22/3)n_b^3$ if $m > n >n_b $
and $(22/3) nn_b^2 - (24/3)n_b^3$ if $m=n > n_b$.

\item From this formula, it is clearer that, if one wants to minimize the number
of floating-point operations on the critical path, one needs to take $n_b=1$.
However, such an action would have disastrous consequences. The communication
increase would be way too high, and the increase gain in parallelism would not be
worth the overhead. More importantly, the efficiency of the elimination kernels would 
be much lower.  In this manuscript, we consider $n_b$ constant, large
enough so that elimination kernels operate at full Level 3 BLAS performance, and so that
communication costs remain relatively low.

\item In the square case, we see that the critical path length of the tiled
algorithms is typically in $\mathcal{O}(n n_b^2)$. This is in sharp contrast
with the current LAPACK algorithm \GEQRF. If we assume that the panel is not
parallelizable, and that the block size for the LAPACK algorithm is $n_b$, then
counting the length of the chain of panel factorization steps leads to a
critical path length in $\mathcal{O}(n^2 n_b)$. There is therefore much more
parallelism to exploit in the tiled algorithms than in the current LAPACK
algorithms. Or, stated differently~\cite{Buttari2008,tileplasma},
the granularity of the tiled algorithms is finer than that of the LAPACK algorithm.

\end{enumerate}

In Table~\ref{tab.tiled} we also report time-steps for the \BT algorithm.
% in PLASMA~\cite{CAQR}.
As its name indicates, this algorithm performs a binary tree
reduction to zero out tiles in each column. Here is an asymptotic expression of
its critical path:

\begin{proposition}
\label{th.bt}
Consider a tiled matrix of size $p \times q$, where $p \geq q$. The critical
path length of \BT is $6 q \log_2 p + o(q \log_2 p)$.
\end{proposition}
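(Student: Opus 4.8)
The plan is to bracket the critical path between two quantities that agree to within $o(q\log_2 p)$. Write $L=\lceil\log_2 p\rceil$, and observe that as $p\to\infty$ we have $L\to\infty$, so any error term that is $O(q+L)$ is automatically $o(q\log_2 p)$. For the \emph{upper bound} I would compare \BT\ against the obvious column‑by‑column schedule, in which column $k$ is started only once column $k-1$ has entirely finished. At that moment all of rows $k,\dots,p$ are ready, and \BT\ performs on them a binary‑tree reduction of depth $L_k=\lceil\log_2(p-k+1)\rceil\le L$. Following the spine of that tree exactly as in the proof of Theorem~\ref{th.main}(1): the opening $\GEQRT$ and $\UNMQR$ cost $4+6$, and thereafter the chain of $\TTMQR$ updates running down the spine advances by exactly $6$ time units per tree level while the interleaved $\TTQRT$'s (cost $2$) are masked; one checks in the same way that every other update inside the column finishes no later than the spine does. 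Hence column $k$ completes within $10+6L$ of its start, so the makespan of the delayed schedule — and a fortiori that of \BT, which runs every kernel as early as possible — is at most $q(10+6L)=6qL+10q=6q\log_2 p+o(q\log_2 p)$.

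For the matching \emph{lower bound} I would exhibit an explicit dependency chain. Using Lemma~\ref{th.above}, assume the \BT\ list has no reverse elimination, so that in every column each $\TTQRT$ and $\TTMQR$ is driven by a pivot strictly above the tile it touches. The chain concatenates $q-1$ segments. Segment $k$ starts at the instant column $k-1$ completes — at which point both the spine row $k-1$ and a row $\rho_{k-1}$ eliminated at the deepest level of column $k-1$ have just received their last update — passes through $\GEQRT(\rho_{k-1},k)$ and $\UNMQR(\rho_{k-1},k,\cdot)$, and then climbs the binary tree of column $k$: $\rho_{k-1}$ sits at an odd offset from the diagonal of column $k$, so it is a leaf of that tree, and the chain follows $\TTMQR$'s from this leaf to its pivot, from that pivot (once updated) to the pivot one level higher, and so on up to the spine row $k$, finishing with the deepest elimination of column $k$ — a path through $\Theta(L)$ successive $\TTMQR$ operations, each worth $6$. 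Its endpoint is the final update of column $k$, which is exactly where segment $k+1$ begins. One then verifies — by the same finite, column‑local case analysis that validates the slowed‑down schedules in Theorem~\ref{th.main}, and which is convenient to confirm with the companion simulator — that these dependencies are genuine and that the start of segment $k$ cannot precede the end of segment $k-1$ by more than a constant. Concatenating yields a chain of length at least $(q-1)\bigl(6L-O(1)\bigr)-O(1)=6q\log_2 p-o(q\log_2 p)$, and together with the upper bound this proves the proposition.

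The genuinely delicate part is the inter‑column argument in the lower bound. Although the leaves of column $k$'s tree become available over a whole time interval rather than all at once — so that column $k$'s reduction could a priori telescope into the tail of column $k-1$'s — one must show the overlap is only $O(1)$ tree levels deep, i.e.\ that the row retired last in column $k-1$ gates, through a chain of $\approx\log_2 p$ $\TTMQR$ operations, the row retired last in column $k$. This forces each of the $q$ columns to contribute its full $\approx 6\log_2 p$ depth to the critical path. The symmetric statement needed for the upper bound, that nothing inside a column can cost more than the clean $6$‑per‑level spine chain, is the same kind of routine finite check — the binary‑tree analogue of the flat‑tree computation already carried out for \FT\ in Theorem~\ref{th.main}(1).
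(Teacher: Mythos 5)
Your route differs genuinely from the paper's: the paper derives an exact closed form, $(10+6\log_2 p)q-4\log_2 p-6$, for $p$ and $q$ powers of two by a (deferred, program-checked) induction, and then extends asymptotically by rounding each dimension up to a power of two; you instead sandwich the critical path between a delayed column-by-column schedule and an explicit dependency chain. The upper half of your sandwich is sound and arguably more transparent than the paper's: the delayed schedule is a valid schedule for the same DAG, so its makespan dominates the critical path, and the within-column bound $10+6L_k$ (one \GEQRT and \UNMQR, then one \TTMQR per tree level along the spine) is the same routine check as for \FT in Theorem~\ref{th.main}(1).

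The gap is in your lower bound, which is where all the content of the proposition lies (the upper bound alone cannot show that \BT fails to be asymptotically optimal). Your chain rests on two structural claims that are asserted rather than established: (i) that the row $\rho_{k-1}$ retired last in column $k-1$ enters column $k$'s reduction tree as a leaf \emph{at maximal depth} $L_k$, so that its leaf-to-root path really crosses $L_k$ levels of \TTMQR's; and (ii) that the first \TTMQR of that path is genuinely gated by the \emph{last} operation of column $k-1$, rather than by some shallow, early elimination of column $k-1$ --- which is exactly the telescoping you flag as the delicate point. Neither claim is a ``finite, column-local case analysis'' in the sense of Theorem~\ref{th.main}(3): there the banded matrix $B$ admits only a constant number of pairings per column, whereas here the object to analyze is a binary tree on $p-k+1$ rows whose shape --- and hence the depth at which $\rho_{k-1}$ sits --- depends on $p-k+1$ and on the pairing convention, which neither you nor the paper pins down; when $p-k+1$ is not a power of two the tree is unbalanced and ``odd offset from the diagonal'' does not guarantee a maximal-depth leaf. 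Finally, your hedge that segment $k$ ``cannot precede the end of segment $k-1$ by more than a constant'' is not the right form for a chain-based lower bound: either the concatenation is a genuine flow-dependence chain (and then no slack estimate is needed), or it is not (and then a timing argument about when operations could start does not bound the critical path from below). Closing the gap requires fixing the pairing and proving (i) and (ii) by induction on $k$ --- essentially the tedious induction the paper performs to obtain its exact formula.
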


\begin{proof}
It is possible to derive an exact expression for the critical path length of
\BT in the special case where $p$ and $q$ are both exact powers of two, with
$q<p$. We obtain the value $(10 + 6 \log_2 p) q - 4 \log_2 p - 6$.  As before,
the proof goes by (tedious) induction. Here again, we have written a program for
a sanity check of the latter value. The asymptotic value follows easily for an
arbitrary matrix, by enlarging each dimension to the nearest power of two.
\end{proof}

Proposition~\ref{th.bt} shows that \BT is not asymptotically optimal.
%, but performs best for tall and skinny matrices.
The PLASMA library provides more
algorithms, that can be informally described as trade-offs between \FT and \BT.
(We remind the reader that \FT is the same as algorithm as \SK.)
These algorithms are referred to as \PT in all the following, and differ by the
value of an input parameter called the \emph{domain size} $\BS$.
This domain size can be any value between $1$ and
$p$, inclusive.  Within a domain, that includes $\BS$ consecutive rows,
the algorithm works just as \FT: the
first row of each domain acts as a local panel and is used to zero out the
tiles in all the other rows of the domain. Then the domains are merged: the panel rows
are zeroed out by a binary tree reduction, just as in \BT.
As the algorithm progresses through the columns, the
domain on the very bottom is reduced accordingly, until such time that there is
one less domain.  For the case that $\BS=1$, \PT follows a
binary tree on the entire column, and for $\BS = p$, the algorithm executes a
flat tree on the entire column.  It seems very
difficult for a user to select the domain size $\BS$ leading to best performance,
but it is known that $\BS$ should increase as $q$ increases. Table~\ref{tab.tiled}
shows the time-steps of \PT with a domain size of $\BS =5$.
In the experiments of
Section~\ref{sec.experiments}, we use all possible values of $\BS$ and retain
the one leading to the best value.

So far our study has only been concerned with algorithms based on TT kernels.
Indeed, in the manuscript, \FT stands for \TTFT.  We now give the critical path
of the algorithm \TSFT. This corresponds to the \FT algorithm (i.e., \SK) with
TS kernels. This algorithm was introduced
in~\cite{Buttari2008,tileplasma,Quintana:2009} and is available in PLASMA for
performing the QR factorization of a matrix on multicore architecture.

\begin{proposition} The critical path length for \TSFT is
\label{prop:tsft}
      \begin{flalign*}
          6p - 2 \quad &\textrm{for  $p \geq q = 1$}\\
          12p + 18q - 32 \quad &\textrm{for $p > q > 1$} \\
          30p - 34 \quad &\textrm{for $p = q > 1$}
      \end{flalign*}
\end{proposition}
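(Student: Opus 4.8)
The plan is to mirror the induction used in the proof of Theorem~\ref{th.main}(1), but with the TS dependency structure from Algorithm~\ref{alg.elimSQ} instead of the TT structure. Recall that in \TSFT, each column $k$ is processed with a flat tree: $\GEQRT(k,k)$ factors the panel tile, then for $i = k+1, \dots, p$ the tile $(i,k)$ is zeroed out by $\TSQRT(i,k,k)$, and the corresponding row updates $\UNMQR(k,k,j)$ and $\TSMQR(i,k,k,j)$ are issued for $j>k$. The key quantitative facts I would extract from Table~\ref{tab.kernels} and the dependency list following Algorithm~\ref{alg.elimSQ} are: (i) a \GEQRT costs $4$; (ii) a \TSQRT costs $6$ and depends only on the preceding \GEQRT (or the preceding \TSQRT in the same column) and on the \TSMQR that brought tile $(i,k)$ to ``full'' state from column $k-1$; (iii) along a single column the chain $\GEQRT(k,k) \prec \TSQRT(k+1,k,k) \prec \TSQRT(k+2,k,k) \prec \cdots$ forces a new zero every $6$ time units once the first one lands; and (iv) the row-$k$ update chain $\UNMQR(k,k,k+1)$ costs $6$ and feeds the first $\TSMQR$ on column $k+1$ which costs $12$.

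First I would handle the base case $p \geq q = 1$: only the chain $\GEQRT(1,1) \prec \TSQRT(2,1,1) \prec \cdots \prec \TSQRT(p,1,1)$ is on the critical path, giving $4 + 6(p-1) = 6p-2$. Next, for $p > q > 1$, I would show by induction on $k \geq 2$ that tile $(i,k)$ is zeroed out at time $12i + 18k - 32$ for $i > k$. For the initialization at $k=2$: after $\GEQRT(1,1)$ (time $4$), the update $\UNMQR(1,1,2)$ ends at time $10$; meanwhile $\TSQRT(2,1,1)$ ends at time $10$, so $\TSMQR(2,1,1,2)$ ends at $\max(10,10)+12 = 22$, making tile $(2,2)$ full at time $22$; then $\GEQRT(2,2)$ ends at $26$ and $\TSQRT(3,2,2)$ would need tile $(3,2)$ full — $\TSMQR(3,1,1,2)$ ends at $\max(10,16)+12 = 28$ — so tile $(3,2)$ is zeroed at $\max(26,28)+6 = 34 = 12\cdot3 + 18\cdot2 - 32$, and thereafter every $6$ units. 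For the inductive step I would carry essentially the same bookkeeping as in the proof of Theorem~\ref{th.main}(1): given that $(k+1,k)$ is zeroed at $t = 12(k+1)+18k-32$, track when $(k+1,k+1)$ becomes full (via the $\UNMQR$ on row $k$ of column $k$ followed by its $\TSMQR$), when $\GEQRT(k+1,k+1)$ completes, when $(k+2,k+1)$ becomes full (via the $\TSMQR$ updates cascading from column $k$), and verify that $(k+2,k+1)$ is zeroed at $t+22$; then the flat-tree spacing gives a new zero every $6$ units, yielding the formula for column $k+1$. Finally, for the square case $p = q > 1$, the last column only requires $\GEQRT(q,q)$ (cost $4$) in place of a $\TSQRT$ (cost $6$) on tile $(q+1,q)$; substituting $p = q$ into $12p+18q-32$ and adjusting for this difference — and for the fact that in the square case there is no update to the right of the last column to chain into — accounts for the change from $12p+18q-32$ to $30p-34$.

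The main obstacle is exactly the one the authors flag for their own proofs: the case analysis of which operation is actually on the critical path at each step is tedious, because \TSMQR has weight $12$ while \UNMQR has weight $6$, so the $\max$ in ``$(i,k+1)$ becomes full at $\max(\text{row-update chain}, \text{column-update chain}) + 12$'' can be governed by either argument depending on how far down the column we are. I would resolve this by proving a uniform claim that the $\TSMQR$-governed ``full'' time of tile $(i,k+1)$ is $12i + 18k - 32 + (\text{const})$ — i.e., the same $6$-per-row spacing persists — so that the $\max$ is always attained by the same branch for all $i$ large enough, with the first one or two rows of each column being the only genuinely special cases. Once that regularity claim is established (and, as the authors do elsewhere, sanity-checked against the tabulated values, e.g. the \PT column with $\BS=p$ would coincide with \TSFT), the three formulas follow by the same inductive templates sketched above.
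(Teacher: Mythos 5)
Your overall strategy is the right one and matches the paper's: induction on $p$ for $q=1$, then induction on $k$ with the per-tile formula $12i+18k-32$, then the $p=q$ adjustment. But there is a concrete error in your dependency accounting that breaks the base case of the induction and contradicts the very formula you are trying to establish. You treat $\TSMQR(3,1,1,2)$ as depending only on $\UNMQR(1,1,2)$ and $\TSQRT(3,1,1)$, giving $\max(10,16)+12=28$. This misses that $\TSMQR(i,1,1,2)$ reads \emph{and writes} the pivot-row tile $(1,2)$, so the updates $\TSMQR(2,1,1,2), \TSMQR(3,1,1,2), \dots$ in a fixed column $j$ are serialized among themselves (exactly as the $\TSQRT$'s are serialized on tile $(1,1)$). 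Hence $\TSMQR(3,1,1,2)$ cannot start before $\TSMQR(2,1,1,2)$ finishes at $22$, so it ends at $34$, and tile $(3,2)$ is zeroed at $\max(26,34)+6=40=12\cdot 3+18\cdot 2-32$. Your value $34$ does not equal $12\cdot3+18\cdot2-32=40$, so as written your initialization disproves rather than verifies the formula.

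The same omission corrupts your claim (iii) and your closing ``regularity claim'': you assert a new zero every $6$ time units per column and simultaneously a full-time of the form $12i+\mathrm{const}$, which is internally inconsistent ($12i$ is $12$ per row). The correct picture, and the reason the coefficient of $i$ is $12$ rather than $6$, is that in every column $k\geq 2$ the binding chain is the serialized $\TSMQR$ update chain at $12$ units per row, not the $\TSQRT$ chain at $6$ units per row; this is also what produces the $+12$ per column step in the inductive step. Once you add the pivot-row serialization of $\TSMQR$ to your dependency list, your induction goes through essentially as in the paper. (Your square-case paragraph inherits the same hand-wave as the paper's --- substituting $p=q+1$ and saving $2$ gives $30q-22$, not $30q-34$; the remaining $12$ comes from the fact that the last column of a square matrix ends with a $\GEQRT$ fed by a single $\TSMQR$ rather than a full $\TSQRT$ chain, so it deserves its own short computation.)
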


\begin{proof}
    Consider the case of $p \geq q = 1$. In order to show that for any $p$,
    with $q=1$, the critical path is of length $6p-2$, we shall proceed by
    induction on $p$.  If $p=q=1$, then from Table~\ref{tab.kernels} the result
    is obtained since only $\GEQRT(1,1)$ is required.  With the base case
    established, now assume that this holds for all $p-1 > q = 1$.  Thus at
    time $t=6(p-1) -2 = 6p - 8$, we have that tile $(1,1)$ has been factorized
    into a triangle and for all $p-1 \geq i > 1$, tile $(i,1)$ has been zeroed
    out.  Therefore, tile $(p,1)$ will be zeroed out with $\TSQRT(p,1)$ at time
    $t+6 = 6(p-1) - 2 + 6 = 6p-2$.

    Assume that $p>q>1$. We show by induction on $k$ that tile $(i,k)$, for
    $i>k\geq2$, is zeroed out at time unit $12i + 18k - 32$.  Tile $(2,2)$ is
    updated from step $k=1$ at time $6(2) - 2 + 12 = 22$, it is factored into a
    triangle at time $28$. Tile $(3,2)$ is zeroed out at time $28+12=40=12
    \times 3+18 \times 2 - 32$, and a new tile in column $2$ is zeroed out
    every $12$ time units, hence the initialization of the induction for $k=2$.

    Assume now that the formula holds up to column $k$, and let $t=12(k+1) +
    18k - 32$ be the time at which tile $(k+1,k)$ is zeroed out.  Tile
    $(k+1,k+1)$ is updated from step $k$ at time $t+12$ and factored into a
    triangle at time $t+18$. By induction, tile $(k+2,k)$ is zeroed out at time
    $t+12$.  Hence tile $(k+2,k+1)$ can indeed be zeroed out at time
    $\max(t+12,t+18)+12=t+30$. A new tile in column $k+1$ can be zeroed out
    every $12$ time units, hence the induction formula for $k+1$.

    For a square matrix of size $q \times q$, consider the above formula for a
    rectangular matrix with $p=q+1$. Instead of zeroing out the last tile
    $(q+1,q)$ with tile $(q,q)$ in $6$ time units with $\TSQRT(q+1,q)$, we
    simply need to factor tile $(q,q)$ into a triangle with $\GEQRT(q,q)$. This
    costs $4$ time units instead of $6$, and explains the difference of $2$ in
    the formula for square matrices.
\end{proof}

As we can see, the critical path of \TSFT (Proposition~\ref{prop:tsft})
is longer than the one of \FT (Theorem~\ref{th.main}(1)).
This stems from the facts that (1) a TS algorithm can be converted into a TT algorithm, and (2)
this conversion increases the parallelism, and, consequently, reduces the critical path length.

\section{Experimental results}
\label{sec.experiments}

All experiments were performed on a 48-core machine composed of eight
hexa-core AMD Opteron 8439 SE (codename Istanbul) processors running at 2.8
GHz. Each core has a theoretical peak of 11.2 Gflop/s with a peak of 537.6
Gflop/s for the whole machine. The Istanbul micro-architecture is a NUMA
architecture where each socket has 6 MB of level-3 cache and each processor has
a 512 KB level-2 cache and a 128 KB level-1 cache.  After having benchmarked
the AMD ACML and Intel MKL BLAS libraries, we selected MKL (10.2) since it
appeared to be slightly faster in our experimental context.  Linux 2.6.32 and
Intel Compilers 11.1 were also used in conjunction with PLASMA 2.3.1.

For all results, we show both double and double complex precision, using all 48
cores of the machine.  The matrices are of size $m=8000$ and $200 \leq n \leq
8000$.  The tile size is kept constant at $n_b=200$, so that the
matrices can also be viewed as $p \times q$ tiled matrices where $p=40$ and $1
\leq q \leq 40$. All kernels use an inner blocking parameter of $i_b=32$.

Asymptotically all operations in a QR factorization are FMAs (``{\em fused
multiply-add}'', $ y \leftarrow \alpha x + y$).  In real arithmetic, an FMA
involves three double precision numbers for two flops, but these two flops can
be combined into one FMA instruction and thus completed in one cycle.  In
complex arithmetic, the operation $ y \leftarrow \alpha x + y$ involves
six double precision numbers for eight flops; we also note that there is no
such thing as a complex-arithmetic FMA.  The ratio of computation/communication
is therefore, potentially, four times higher in complex arithmetic than
in real arithmetic. Communication aware algorithms are much more critical in
real arithmetic than in complex arithmetic. This is the reason why we present
results in complex arithmetic and in real arithmetic. Our new algorithms will
be at their best in the complex arithmetic case where parallelism is most
important while communication less. In the real arithmetic case, we will see
that TS kernels which perform potentially less communication than TT kernels
have the advantage as soon as there is enough parallelism from the algorithm
($q$ large enough).

The PLASMA interface allows one to specify the dependencies between tasks by
designating the data as either INPUT, OUTPUT, INOUT, or NODEP.  Currently,
the update kernels (\UNMQR, \TTMQR, and
\TSMQR) introduced false dependencies between the tasks which sequentializes
the execution of update with the factorization kernels \TTQRT or \TSQRT.  In
order to alleviate these, we altered the dependency designation within each of
the update kernels for the matrix of Householder reflectors, V, from INPUT to
NODEP as is further explained in \cite{CPE:CPE1467}.  The dependencies between
the tasks are still consistent since the T matrix within each update kernel
continues to be designated as INPUT so that any subsequent task which
overwrites this T matrix cannot be executed.

For each experiment, we provide a comparison of the theoretical performance to
the actual performance.  The theoretical performance is obtained by modeling
the limiting factor of the execution time as either the critical path, or the sequential time
divided by the number of processors.
%until such time that full parallelism
%is realized and the limiting factor becomes the performance of the kernels per
%process which
This is similar in approach to the Roofline
model~\cite{Williams:2009:RIV:1498765.1498785}.  Taking $\gamma_{seq}$ as the
sequential performance, $T$ as the total number of flops, $cp$ as the length of
the critical path, and $P$ as the number of processors, the predicted
performance, $\gamma_{pred}$, is
\[ \gamma_{pred} = \frac{\gamma_{seq} \cdot T}{\max \left( \frac{T}{P}, cp \right)} \]
Figures~\ref{fig.fig_tt_pic1_p40_z.th} and~\ref{fig.fig_tt_pic1_p40_d.th} depict
the predicted performance of all algorithms which use the \emph{Triangle on top
of triangle} kernels.  For double complex precision, sequential kernels reach
$3.1860$ GFlop/s while in double precision, the peak performance is $3.8440$
GFlop/s.  Since \PT provides an additional tuning parameter of the domain size,
we show the results for each value of this parameter as well as the composition
of the best of these domain sizes. Again, it is not evident what the domain size
should be for the best performance, hence our exhaustive search.

Part of our comprehensive study also involved comparisons made to the
Semi-Parallel Tile and Fully-Parallel Tile CAQR algorithms found
in~\cite{Hadri_enhancingparallelism} which are much the same as those found in
PLASMA.  As with PLASMA, the tuning parameter $\BS$ controls the domain size
upon which a flat tree is used to zero out tiles below the root tile within the
domain and a binary tree is used to merge these domains.  Unlike PLASMA, it is
not the bottom domain whose size decreases as the algorithm progresses through
the columns, but instead is the top domain.  In this study, we found that the
PLASMA algorithms performed identically or better than these algorithms and
therefore we do not report these comparisons.

\begin{sidewaysfigure*}
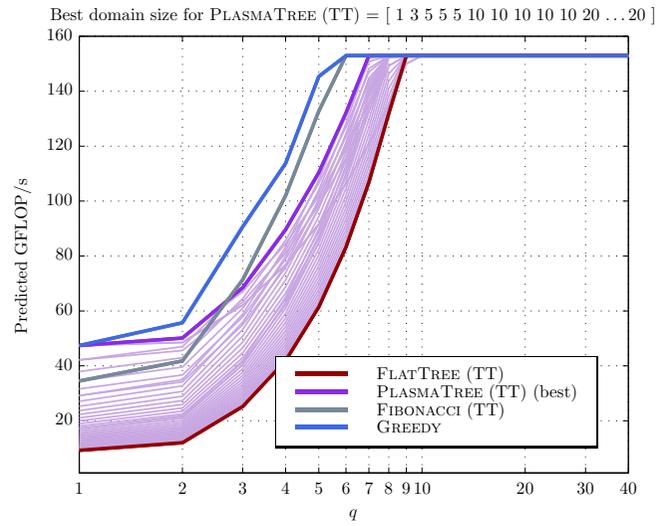
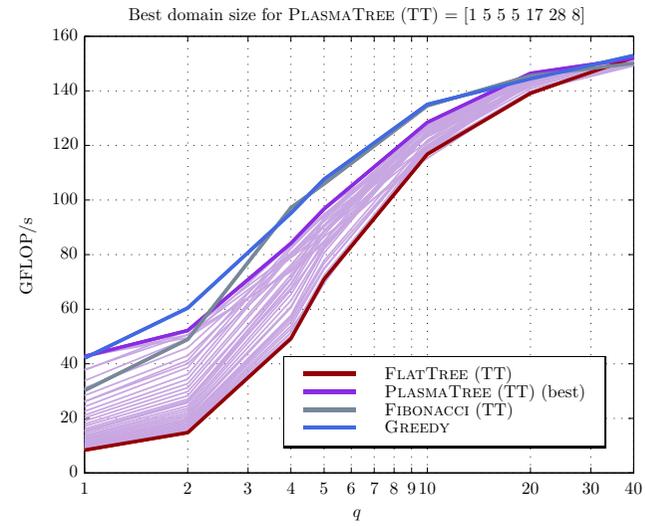
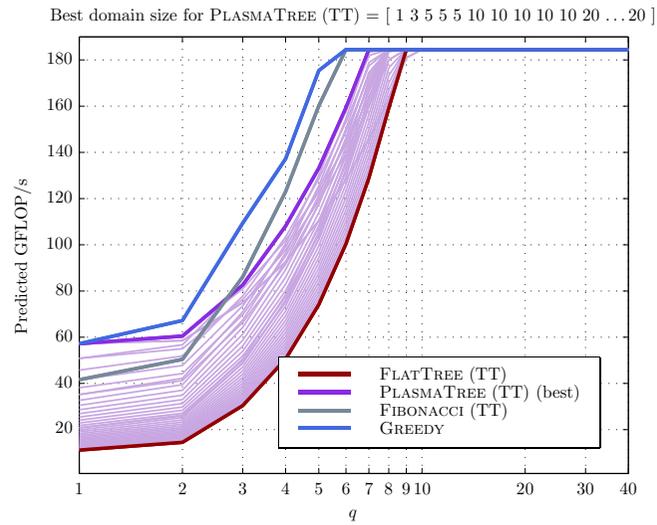
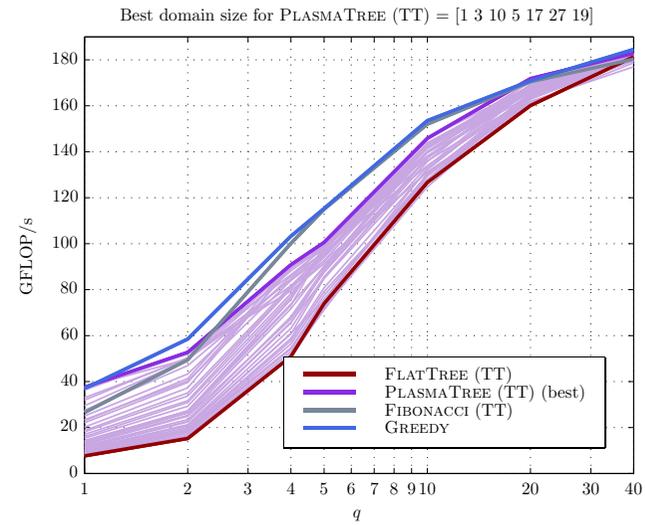

%\begin{figure}[htb]
\centering
\subfloat[Predicted (double complex)\label{fig.fig_tt_pic1_p40_z.th}]{%
    \resizebox{.45\linewidth}{!}{\psfragfig{fig/results_TT_p40_pic1_z}}%
}
\subfloat[Experimental (double complex)\label{fig.fig_tt_pic1_p40_z.exp}]{%
    \resizebox{.45\linewidth}{!}{\psfragfig{fig/results_TT_p40_IG_z_48_8000_pic1}}%
}
\\
\subfloat[Predicted (double)\label{fig.fig_tt_pic1_p40_d.th}]{%
    \resizebox{.45\linewidth}{!}{\psfragfig{fig/results_TT_p40_pic1_d}}%
}
\subfloat[Experimental (double)\label{fig.fig_tt_pic1_p40_d.exp}]{%
    \resizebox{.45\linewidth}{!}{\psfragfig{fig/results_TT_p40_IG_d_48_8000_pic1}}%
}
\caption{\label{fig.fig_tt_pic1_p40}Predicted and experimental performance of QR factorization - \emph{Triangle on top of triangle} kernels}
%\end{figure}
\end{sidewaysfigure*}

\begin{sidewaysfigure*}
\centering
\begin{minipage}{\linewidth}
\subfloat[Theoretical CP length]{%
    \resizebox{.32\linewidth}{!}{\psfragfig{fig/results_TT_p40_pic2}}%
}
\subfloat[Experimental (double complex)]{%
    \resizebox{.32\linewidth}{!}{\psfragfig{fig/results_TT_p40_IG_z_48_8000_pic2}}%
}
\subfloat[Experimental (double)]{%
    \resizebox{.32\linewidth}{!}{\psfragfig{fig/results_TT_p40_IG_d_48_8000_pic2}}%
}
\caption{\label{fig.fig_tt_pic2_p40} Overhead in terms of critical path length and time with respect to \Greedy (\Greedy = 1) }
\end{minipage}
\\
\begin{minipage}{\linewidth}
\subfloat[Theoretical CP length]{%
    \label{fig.fig_tt_pic3_p40.th}
    \resizebox{.32\linewidth}{!}{\psfragfig{fig/results_TT_p40_pic3}}%
}
\subfloat[Experimental (double complex)]{%
    \label{fig.fig_tt_pic3_p40_z.exp}
    \resizebox{.32\linewidth}{!}{\psfragfig{fig/results_TT_p40_IG_z_48_8000_pic3}}%
}
\subfloat[Experimental (double)]{%
    \label{fig.fig_tt_pic3_p40_d.exp}
    \resizebox{.32\linewidth}{!}{\psfragfig{fig/results_TT_p40_IG_d_48_8000_pic3}}%
}

\caption{\label{fig.fig_tt_pic3_p40} Detailed view of the overhead in terms of critical path length and time with respect to \Greedy (\Greedy = 1) }
\end{minipage}
\end{sidewaysfigure*}

Figure~\ref{fig.fig_tt_pic1_p40_z.exp} and~\ref{fig.fig_tt_pic1_p40_d.exp}
illustrate the experimental performance reached by \linebreak \Greedy, \MC and \PT
algorithms using the \emph{TT (Triangle on top of triangle)} kernels. In both cases,
double or double complex precision, the performance of \Greedy is better than
\PT even for the best choice of domain size.  Moreover, as expected from the
analysis in Section~\ref{sec:TiledAlgorithms}, \Greedy outperforms \MC the
majority of the time.  Furthermore, we see that, for rectangular matrices, the
experimental performance in double complex precision matches the prediction.
This is not the case for double precision because communications have higher
impact on performance.

While it is apparent that \Greedy does achieve higher levels of performance,
the percentage may not be as obvious.  To that end, taking \Greedy as the
baseline, we present in Figure~\ref{fig.fig_tt_pic2_p40} the theoretical,
double, and double complex precision overhead for each algorithm that uses the
\emph{Triangle on top of triangle} kernel as compared to \Greedy. These
overheads are respectively computed in terms of critical path length and time.
At a smaller scale (Figure~\ref{fig.fig_tt_pic3_p40}), it can be seen that
\Greedy can perform up to 13.6\% better than \PT.

For all matrix sizes considered, $p=40$ and $1 \leq q \leq 40$, in the
theoretical model, the critical path length for \Greedy is either the same as
that of \PT ($q=1$) or is up to 25\% shorter than \PT ($q=6$).  Analogously, the
critical path length for \Greedy is at least 2\% to 27\% shorter than that of
\MC.  In the experiments, the matrix sizes considered were $p=40$ and $q \in \{
1, 2, 4, 5, 10, 20, 40\}$.  In double precision, \Greedy has a decrease of at
most 1.5\% than the best \PT ($q=1$) and a gain of at most 12.8\% than the best
\PT ($q=5$).  In double complex precision, \Greedy has a decrease of at most
1.5\% than the best \PT ($q=1$) and a gain of at most 13.6\% than the best \PT
($q=2$).  Similarly, in double precision, \Greedy provides a gain of 2.6\% to
28.1\% over \MC and in double complex precision, \Greedy has a decrease of at
most 2.1\% and a gain of at most 28.2\% over \MC.

Although it is evidenced that \PT does not vary too far from \Greedy or \MC,
one must keep in mind that there is a tuning parameter involved and we choose
the best of these domain sizes for \PT to create the composite result, whereas
with \Greedy, there is no such parameter to consider.  Of particular interest
is the fact that \Greedy always performs better than any other algorithm\footnote{When $q=1$, \Greedy and \FT exhibit close performance. They
both perform a binary tree reduction, albeit with different row pairings.} for $
p \gg q$. In
the scope of \PT, a domain size $\BS=1$ will force the use of a binary tree so
that both \Greedy and \PT behave the same. However, as the matrix tends more to
a square, i.e., $q$ tends toward $p$, we observe that the performance of all of
the algorithms, including \FT, are on par with \Greedy.  As more columns are
added, the parallelism of the algorithm is increased and the critical path
becomes less of a limiting factor, so that the performance of the kernels is
brought to the forefront.  Therefore, all of the algorithms are performing
similarly since they all share the same kernels.

\begin{sidewaysfigure*}
%\begin{figure}[htbp]
\centering
\begin{minipage}{0.48\linewidth}
\subfloat[Factorization kernels]{
    \resizebox{\linewidth}{!}{\psfragfig{fig/perf_factorization_kernels_z}}%
}
\\
\subfloat[Update kernels]{
    \resizebox{\linewidth}{!}{\psfragfig{fig/perf_update_kernels_z}}%
}
    \caption{\label{fig.perf_kernels_z} Kernel performance for double complex precision}
\end{minipage}
~~
\begin{minipage}{0.48\linewidth}
\subfloat[Factorization kernels]{
    \resizebox{\linewidth}{!}{\psfragfig{fig/perf_factorization_kernels_d}}%
}
\\
\subfloat[Update kernels]{
    \resizebox{\linewidth}{!}{\psfragfig{fig/perf_update_kernels_d}}%
}
    \caption{\label{fig.perf_kernels_d} Kernel performance for double precision}
\end{minipage}
\end{sidewaysfigure*}

In order to accurately assess the impact of the kernel selection towards the
performance of the algorithms, Figures~\ref{fig.perf_kernels_z}
and~\ref{fig.perf_kernels_d} show both the in cache and out of cache
performance using the \emph{No Flush} and \emph{MultCallFlushLRU} strategies as
presented in~\cite{lawn242,Whaley:2008:AAC:1462062.1462065}.  Since an
algorithm using \emph{TT} kernels will need to call
\GEQRT as well as \TTQRT to achieve the same as the \emph{TS} kernel \TSQRT, the comparison is made
between \GEQRT + \TTQRT and \TSQRT (and similarly for the updates).  For $n_b=200$, the observed ratio for in
cache kernel speed for \TSQRT to \GEQRT + \TTQRT is 1.3374, and for \TSMQR to
\UNMQR + \TTMQR is 1.3207. For out of cache, the ratio for \TSQRT to \GEQRT
+ \TTQRT is 1.3193 and for \TSMQR to \UNMQR + \TTMQR it is 1.3032.  Thus, we
can expect about a 30\% difference between the selection of the kernels, since
we will have instances of using in cache and out of cache throughout the run.
Most of this difference is due to the higher efficiency and data locality
within the \emph{TT} kernels as compared to the
\emph{TS} kernels.

\begin{sidewaysfigure*}
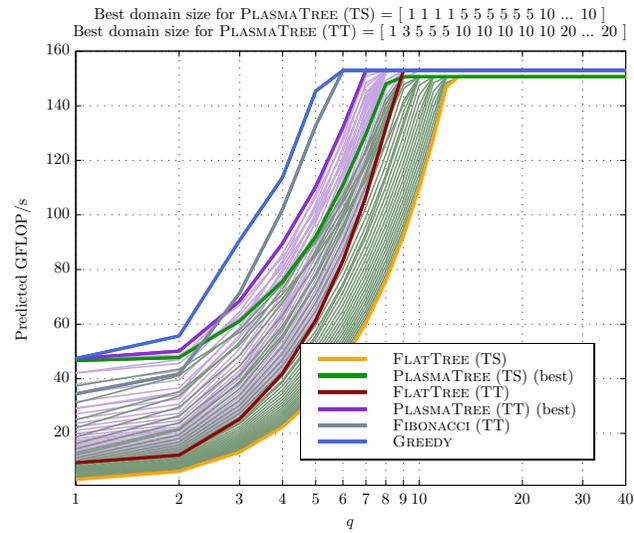
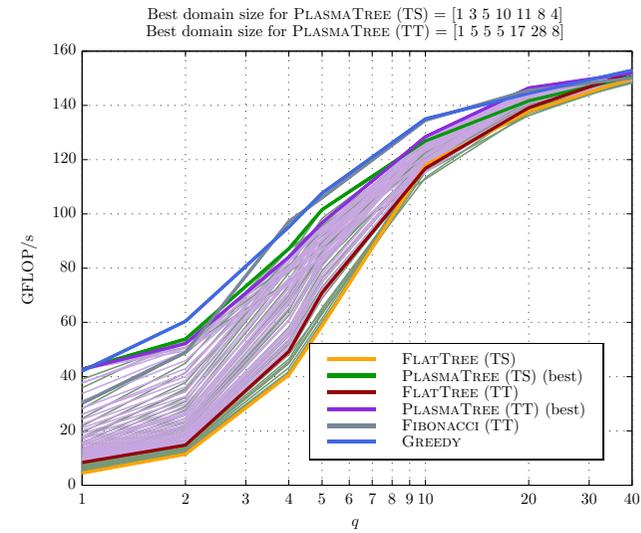
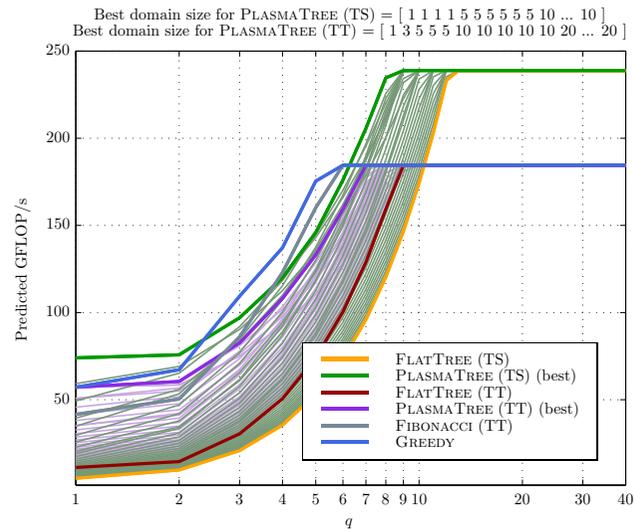
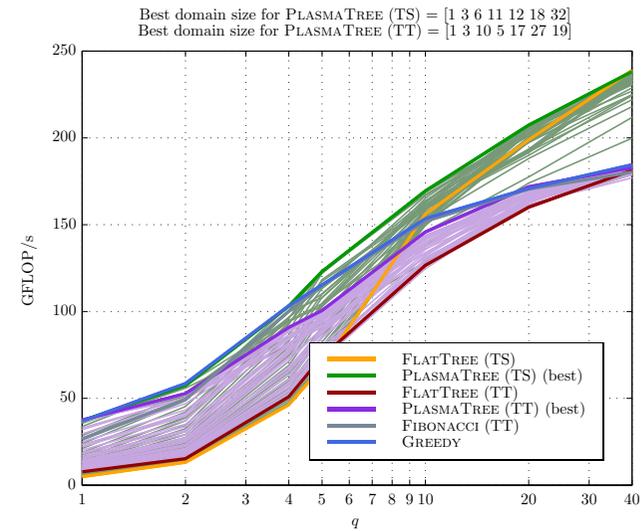

%\begin{figure}[htbp]
\centering
\subfloat[Predicted (double complex)]{%
    \label{fig.fig_aa_pic1_p40_z.th}%
    \resizebox{.45\linewidth}{!}{\psfragfig{fig/results_AA_p40_pic1_z}}%
}
\subfloat[Experimental (double complex)]{%
    \label{fig.fig_aa_pic1_p40_z.exp}%
    \resizebox{.45\linewidth}{!}{\psfragfig{fig/results_AA_p40_IG_z_48_8000_pic1}}%
}
\\
\subfloat[Predicted (double)]{%
    \label{fig.fig_aa_pic1_p40_d.th}%
    \resizebox{.45\linewidth}{!}{\psfragfig{fig/results_AA_p40_pic1_d}}%
}
\subfloat[Experimental (double)]{%
    \label{fig.fig_aa_pic1_p40_d.exp}%
    \resizebox{.45\linewidth}{!}{\psfragfig{fig/results_AA_p40_IG_d_48_8000_pic1}}%
}
\caption{\label{fig.fig_aa_pic1_p40}
Predicted and experimental performance of QR factorization - All kernels}
%\end{figure}
\end{sidewaysfigure*}

\begin{sidewaysfigure*}
\centering
\begin{minipage}{\linewidth}
\subfloat[Theoretical CP length]{%
    \resizebox{.32\linewidth}{!}{\psfragfig{fig/results_AA_p40_pic2}}%
}
\subfloat[Experimental (double complex)]{%
    \resizebox{.32\linewidth}{!}{\psfragfig{fig/results_AA_p40_IG_z_48_8000_pic2}}%
}
\subfloat[Experimental (double)]{%
    \resizebox{.32\linewidth}{!}{\psfragfig{fig/results_AA_p40_IG_d_48_8000_pic2}}%
}
\caption{\label{fig.fig_aa_pic2_p40} Overhead in terms of critical path length and time with respect to \Greedy (\Greedy = 1) }
\end{minipage}
\\
\begin{minipage}{\linewidth}
\subfloat[Theoretical CP length]{%
    \label{fig.fig_aa_pic3_p40.th}
    \resizebox{.32\linewidth}{!}{\psfragfig{fig/results_AA_p40_pic3}}%
}
\subfloat[Experimental (double complex)]{%
    \label{fig.fig_aa_pic3_p40_z.exp}
    \resizebox{.32\linewidth}{!}{\psfragfig{fig/results_AA_p40_IG_z_48_8000_pic3}}%
}
\subfloat[Experimental (double)]{%
    \label{fig.fig_aa_pic3_p40_d.exp}
    \resizebox{.32\linewidth}{!}{\psfragfig{fig/results_AA_p40_IG_d_48_8000_pic3}}%
}

\caption{\label{fig.fig_aa_pic3_p40} Detailed view of the overhead in terms of critical path length and time with respect to \Greedy (\Greedy = 1) }
\end{minipage}
\end{sidewaysfigure*}

Having seen that kernel performance can have a significant impact,
we also compare the \emph{TT} based
algorithms to those using the \emph{TS} kernels. The goal is to provide
a complete assessment of all currently available
algorithms,  as shown
in Figure~\ref{fig.fig_aa_pic1_p40}.  For double precision, the observed
difference in kernel speed is 4.976 GFLOP/sec for the \emph{TS} kernels
versus 3.844 GFLOP/sec for the \emph{TT} kernels which provides a ratio of 1.2945 and is in accordance with
our previous analysis.  It can be seen that as the number of columns increases,
whereby the amount of parallelism increases, the effect of the kernel
performance outweighs the benefit provided by the extra parallelism afforded
through the \emph{TT} algorithms.  Comparatively, in
double complex precision, \Greedy does perform better, even against the
algorithms using the \emph{TS} kernels.  As before, one
must keep in mind that \Greedy does not require the tuning parameter of the
domain size to achieve this better performance.

From these experiments, we showed that in double complex precision, \linebreak \Greedy
demonstrated better performance than any of the other algorithms and moreover,
it does so without the need to specify a domain size as opposed to the
algorithms in PLASMA. In addition, in double precision, for matrices where $p
\gg q$, \Greedy continues to excel over any other algorithm using
the \emph{TT} kernels, and continues to do
so as the matrices become more square.

\begin{table*}
    \centering
    \resizebox{\linewidth}{!}{
    \begin{tabular}{|r|r|r||r|r|r|r||r|r|r|}
       \hline
        $p$ & $q$ & \Greedy & \PT (TT) &    \BS & Overhead & Gain    &  \MC & Overhead &    Gain\\
       \hline
       40 & 1 &     16 &     16 &     1 &   1.0000 &  0.0000 &   22 &   1.3750 &  0.2727\\
       40 & 2 &     54 &     60 &     3 &   1.1111 &  0.1000 &   72 &   1.3333 &  0.2500\\
       40 & 3 &     74 &     98 &     5 &   1.3243 &  0.2449 &   94 &   1.2703 &  0.2128\\
       40 & 4 &    104 &    132 &     5 &   1.2692 &  0.2121 &  116 &   1.1154 &  0.1034\\
       40 & 5 &    126 &    166 &     5 &   1.3175 &  0.2410 &  138 &   1.0952 &  0.0870\\
       40 & 6 &    148 &    198 &    10 &   1.3378 &  0.2525 &  160 &   1.0811 &  0.0750\\
       40 & 7 &    170 &    226 &    10 &   1.3294 &  0.2478 &  182 &   1.0706 &  0.0659\\
       40 & 8 &    192 &    254 &    10 &   1.3229 &  0.2441 &  204 &   1.0625 &  0.0588\\
       40 & 9 &    214 &    282 &    10 &   1.3178 &  0.2411 &  226 &   1.0561 &  0.0531\\
       40 &10 &    236 &    310 &    10 &   1.3136 &  0.2387 &  248 &   1.0508 &  0.0484\\
       40 &11 &    258 &    336 &    20 &   1.3023 &  0.2321 &  270 &   1.0465 &  0.0444\\
       40 &12 &    280 &    358 &    20 &   1.2786 &  0.2179 &  292 &   1.0429 &  0.0411\\
       40 &13 &    302 &    380 &    20 &   1.2583 &  0.2053 &  314 &   1.0397 &  0.0382\\
       40 &14 &    324 &    402 &    20 &   1.2407 &  0.1940 &  336 &   1.0370 &  0.0357\\
       40 &15 &    346 &    424 &    20 &   1.2254 &  0.1840 &  358 &   1.0347 &  0.0335\\
       40 &16 &    368 &    446 &    20 &   1.2120 &  0.1749 &  380 &   1.0326 &  0.0316\\
       40 &17 &    390 &    468 &    20 &   1.2000 &  0.1667 &  402 &   1.0308 &  0.0299\\
       40 &18 &    412 &    490 &    20 &   1.1893 &  0.1592 &  424 &   1.0291 &  0.0283\\
       40 &19 &    432 &    512 &    20 &   1.1852 &  0.1562 &  446 &   1.0324 &  0.0314\\
       40 &20 &    454 &    534 &    20 &   1.1762 &  0.1498 &  468 &   1.0308 &  0.0299\\
       40 &21 &    476 &    554 &    20 &   1.1639 &  0.1408 &  490 &   1.0294 &  0.0286\\
       40 &22 &    498 &    570 &    20 &   1.1446 &  0.1263 &  512 &   1.0281 &  0.0273\\
       40 &23 &    520 &    586 &    20 &   1.1269 &  0.1126 &  534 &   1.0269 &  0.0262\\
       40 &24 &    542 &    602 &    20 &   1.1107 &  0.0997 &  556 &   1.0258 &  0.0252\\
       40 &25 &    564 &    618 &    20 &   1.0957 &  0.0874 &  578 &   1.0248 &  0.0242\\
       40 &26 &    586 &    634 &    20 &   1.0819 &  0.0757 &  600 &   1.0239 &  0.0233\\
       40 &27 &    608 &    650 &    20 &   1.0691 &  0.0646 &  622 &   1.0230 &  0.0225\\
       40 &28 &    630 &    666 &    20 &   1.0571 &  0.0541 &  644 &   1.0222 &  0.0217\\
       40 &29 &    652 &    682 &    20 &   1.0460 &  0.0440 &  666 &   1.0215 &  0.0210\\
       40 &30 &    668 &    698 &    20 &   1.0449 &  0.0430 &  688 &   1.0299 &  0.0291\\
       40 &31 &    684 &    714 &    20 &   1.0439 &  0.0420 &  710 &   1.0380 &  0.0366\\
       40 &32 &    700 &    730 &    20 &   1.0429 &  0.0411 &  732 &   1.0457 &  0.0437\\
       40 &33 &    716 &    746 &    20 &   1.0419 &  0.0402 &  754 &   1.0531 &  0.0504\\
       40 &34 &    732 &    762 &    20 &   1.0410 &  0.0394 &  776 &   1.0601 &  0.0567\\
       40 &35 &    748 &    778 &    20 &   1.0401 &  0.0386 &  798 &   1.0668 &  0.0627\\
       40 &36 &    764 &    794 &    20 &   1.0393 &  0.0378 &  820 &   1.0733 &  0.0683\\
       40 &37 &    780 &    810 &    20 &   1.0385 &  0.0370 &  842 &   1.0795 &  0.0736\\
       40 &38 &    796 &    826 &    20 &   1.0377 &  0.0363 &  862 &   1.0829 &  0.0766\\
       40 &39 &    812 &    842 &    20 &   1.0369 &  0.0356 &  878 &   1.0813 &  0.0752\\
       40 &40 &    826 &    856 &    20 &   1.0363 &  0.0350 &  892 &   1.0799 &  0.0740\\
       \hline
    \end{tabular}
    }
    \caption{\Greedy versus \PT (TT) and \MC (Theoretical)}
\end{table*}
\begin{table*}
    \centering
    \centering
    \begin{tabular}{|r|r|r||r|r|r|r|}
       \hline
         $p$ &  $q$ & \Greedy & \PT (TT) & \BS & Overhead &   Gain\\
         \hline
        40 &  1 &   36.9360 &   37.5020 &  1 &   1.0153 &-0.0153\\
        40 &  2 &   58.5090 &   52.7180 &  3 &   0.9010 & 0.0990\\
        40 &  4 &  103.2670 &   90.7940 & 10 &   0.8792 & 0.1208\\
        40 &  5 &  115.3060 &  100.5540 &  5 &   0.8721 & 0.1279\\
        40 & 10 &  153.5180 &  145.8200 & 17 &   0.9499 & 0.0501\\
        40 & 20 &  170.8730 &  171.8270 & 27 &   1.0056 &-0.0056\\
        40 & 40 &  184.5220 &  182.8160 & 19 &   0.9908 & 0.0092\\
       \hline
    \end{tabular}
    \caption{\Greedy versus \PT (TT) (Experimental Double)}
\end{table*}
\begin{table*}
    \centering
    \begin{tabular}{|r|r|r||r|r|r|r|}
       \hline
         $p$ &  $q$ & \Greedy & \PT (TT) & \BS & Overhead &   Gain\\
         \hline
        40 &  1 &   42.0710 &   42.7120 &  1 &   1.0152 &-0.0152\\
        40 &  2 &   60.4420 &   52.1970 &  5 &   0.8636 & 0.1364\\
        40 &  4 &   95.1820 &   84.1120 &  5 &   0.8837 & 0.1163\\
        40 &  5 &  107.6370 &   96.7530 &  5 &   0.8989 & 0.1011\\
        40 & 10 &  135.0270 &  128.4320 & 17 &   0.9512 & 0.0488\\
        40 & 20 &  144.4010 &  146.4220 & 28 &   1.0140 &-0.0140\\
        40 & 40 &  152.9280 &  151.9090 &  8 &   0.9933 & 0.0067\\
       \hline
    \end{tabular}
    \caption{\Greedy versus \PT (TT) (Experimental Double Complex)}
\end{table*}
\begin{table*}
    \centering
    \begin{tabular}{|r|r|r||r|r|r|}
         \hline
         $p$ &  $q$ & \Greedy &    \MC & Overhead &   Gain\\
         \hline
        40 &  1 &   36.9360 &  26.5610  &   0.7191 & 0.2809\\
        40 &  2 &   58.5090 &  49.4870  &   0.8458 & 0.1542\\
        40 &  4 &  103.2670 & 100.1440  &   0.9698 & 0.0302\\
        40 &  5 &  115.3060 & 115.0020  &   0.9974 & 0.0026\\
        40 & 10 &  153.5180 & 152.0090  &   0.9902 & 0.0098\\
        40 & 20 &  170.8730 & 170.4780  &   0.9977 & 0.0023\\
        40 & 40 &  184.5220 & 180.2990  &   0.9771 & 0.0229\\
         \hline
    \end{tabular}
    \caption{Greedy versus \MC (Experimental Double)}
\end{table*}
\begin{table*}
    \centering
    \begin{tabular}{|r|r|r||r|r|r|}
         \hline
         $p$ &  $q$ & \Greedy &    \MC & Overhead &   Gain\\
         \hline
        40 &  1 &   42.0710 &  30.2280  &   0.7185 & 0.2815\\
        40 &  2 &   60.4420 &  48.9570  &   0.8100 & 0.1900\\
        40 &  4 &   95.1820 &  97.1650  &   1.0208 &-0.0208\\
        40 &  5 &  107.6370 & 105.9610  &   0.9844 & 0.0156\\
        40 & 10 &  135.0270 & 134.5500  &   0.9965 & 0.0035\\
        40 & 20 &  144.4010 & 145.5530  &   1.0080 &-0.0080\\
        40 & 40 &  152.9280 & 150.0980  &   0.9815 & 0.0185\\
         \hline
    \end{tabular}
    \caption{\Greedy versus \MC (Experimental Double Complex)}
\end{table*}

\section{Conclusion}
\label{sec.conclusion}

In this manuscript, we have presented \MC, and \Greedy, two new algorithms for tiled QR factorization.
These algorithms exhibit more parallelism than state-of-the-art implementations
based on reduction trees. We have provided accurate estimations for the length of their critical path,
and we have proven that they were asymptotically optimal for a wide class of matrix shapes, including
all cases where the number of tile rows $p$ and tile columns $q$ are proportional, $p = \lambda q$, $\lambda \geq 1$.
To the best of our knowledge, this proof is the first complexity result in the field of tiled algorithms,
and it lays the theoretical foundations for a comparative study of these algorithms.

Comprehensive experiments on multicore platforms confirm the superiority of the new algorithms for $p \times q$ matrices,
as soon as, say, $p \geq 2q$.
This holds true when comparing not only with previous algorithms using TT (\emph{Triangle on top of triangle}) kernels, but also
with all known algorithms based on TS (\emph{Triangle on top of square}) kernels. Given that TS kernels offer more locality, and
benefit from better elementary arithmetic performance, than TT kernels, the better performance of the new algorithms is
even more striking, and further demonstrates that a large degree of a parallelism was not exploited in previously published solutions.

Future work will investigate several promising directions. First, using rectangular tiles instead of square tiles
could lead to efficient algorithms, with more locality and still the same potential for parallelism. Second,
refining the model to account for communications, and extending it to fully distributed architectures, would lay the ground to the design of
MPI implementations of the new algorithms, unleashing their high level of performance on larger platforms.
Finally, the design of robust algorithms, capable of achieving efficient performance despite variations in processor speeds,
or even resource failures, is a challenging but crucial task to fully benefit from future platforms with a huge number of cores.

\bibliographystyle{abbrv}
\bibliography{biblio}

\end{document}